\DeclareMathAlphabet\mathcal{OMS}{cmsy}{m}{n}
\SetMathAlphabet\mathcal{bold}{OMS}{cmsy}{b}{n}
\def\ve#1{\mathchoice{\mbox{\boldmath$\displaystyle\bf#1$}}
	{\mbox{\boldmath$\textstyle\bf#1$}}
	{\mbox{\boldmath$\scriptstyle\bf#1$}}
	{\mbox{\boldmath$\scriptscriptstyle\bf#1$}}}
\newcommand{\G}{\ensuremath{\mathcal{G}}}
\newcommand\veb{{\ve b}}
\newcommand\vecc{{\ve c}}
\newcommand\veg{{\ve g}}
\newcommand\veG{{\ve G}}
\newcommand\veh{{\ve h}}
\newcommand\vel{{\ve l}}
\newcommand\veq{{\ve q}}
\newcommand\veu{{\ve u}}
\newcommand\vex{{\ve x}}
\newcommand\vey{{\ve y}}
\newcommand\vez{{\ve z}}
\newcommand\veQ{{\ve Q}}
\newtheorem{theorem}{Theorem}
\newtheorem{claim}{Claim}
\newtheorem{lemma}{Lemma}
\newtheorem{definition}{Definition}
\newtheorem{observation}{Observation}
\title{Covering a tree with rooted subtrees}
\author{
Lin Chen\thanks{Department of Computer Science, University of Houston. Email:
  \texttt{chenlin198662@gmail.com}.}
 \and Daniel Marx \thanks{Institute for Computer Science and Control,
 	Hungarian Academy of Sciences (MTA SZTAKI). Email:
   \texttt{dmarx@cs.bme.hu}.} 
}
\date{\today}
\begin{document}
\maketitle

\thispagestyle{empty}

\begin{abstract}
We consider the multiple traveling salesman problem on a weighted tree. In this problem there are $m$ salesmen located at the root initially. Each of them will visit a subset of vertices and return to the root. The goal is to assign a tour to every salesman such that every vertex is visited and the longest tour among all salesmen is minimized. The problem is equivalent to the subtree cover problem, in which we cover a tree with rooted subtrees such that the weight of the maximum weighted subtree is minimized. The classical machine scheduling problem can be viewed as a special case of our problem when the given tree is a star. We observe that, the problem remains NP-hard even if tree height and edge weight are constant, and present an FPT algorithm for this problem parameterized by the largest tour length. To achieve the FPT algorithm, we show a more general result. We prove that, integer linear programming that has a tree-fold structure is in FPT, which extends the FPT result for the $n$-fold integer programming by Hemmecke, Onn and Romanchuk~\cite{hemmecke2013n}.

\end{abstract}
\vspace{2mm}
\hspace{3.5mm}\textbf{Keywords:} Fixed Parameter Tractable; Integer Programming; Scheduling

\clearpage
\setcounter{page}{1}


\section{Introduction}

We consider the multiple traveling salesmen problem on a given tree $T=(V,E)$. In this problem there is a root $r\in V$ where all the $m$ salesmen are initially located. There is a weight $w_e\in \mathbb{Z}_+$ associated with each edge $e\in E$, which is the time consumed by a salesman if he passes this edge. Each salesman starts at $r$, travels a subset of the vertices and returns to $r$. The goal is to determine the tours traveled by each salesman such that every vertex is visited by some salesman, and the makespan, i.e., the time when the last salesman returns to $r$, is minimized.

We observe that the tour of every salesman is actually a subtree rooted at $r$, and the total traveling time of each salesman is exactly twice the total weight of edges in the subtree. Therefore the problem is equivalent as the minmax subtree cover problem, where we aim to find $m$ subtrees $T_i=(V(T_i),E(T_i))$ for $1\le i\le m$ such that $r\in V(T_i)$, $V=\cup_i V(T_i)$ and $\max_i w(T_i)$ is minimized, where $w(T_i)=\sum_{e\in E(T_i)} w_e$. We call $w(T_i)$ as the weight of the subtree $T_i$ and $\max_i w(T_i)$ the {\em makespan}.

The subtree cover problem is a fundamental problem in computer science and has received many studies in the literature. Indeed, when the given graph is a star, the problem is equivalent to the identical machine scheduling problem $P||C_{max}$, where the goal is to assign a set of jobs of processing times $w_1,w_2,\cdots,w_n$ onto $m$ identical parallel machines such the largest load among machines is minimized. We may view each job as an edge of weight $w_j$ in a star graph, whereas $P||C_{max}$ falls exactly into the problem of covering a star with $m$ stars. In 2013, Mnich and Wiese~\cite{mnich2015scheduling} provided an FPT (fixed parameter tractable) algorithm parameterized by the largest job processing time $w_{max}=\max\{w_j|1\le j\le n\}$.

The problem becomes much more complicated when the given graph is a tree. There exist some approximation algorithms for the problme, e.g., Xu et al.~\cite{xu2013exact} showed that there exists an FPTAS when the number of subtrees, $m$, is a constant. However, we are not aware of a paramerized algorithm for this problem. 

\vspace{1mm}
\noindent\textbf{Our contribution.} Our main contribution is to show that the subtree cover problem admits a fixed parameter tractable (FPT) algorithm (parameterized by the makespan). More precisely, we prove the following theorem.

\begin{theorem}\label{thm:main-fpt}
	For some computable function $f$, there exists an FPT algorithm of running time $f(B)m^{4}$ for determining whether there exists a feasible solution for the subtree cover problem of makespan $B$.
\end{theorem}

We remark that, despite the fact that the special case of covering a star admits an FPT algorithm parameterized by the largest edge weight, we show in this paper that the subtree cover problem remains NP-hard even if the tree is of height 2 and every edge has a unit weight. Therefore, we restrict our attention to the larger parameter $B$.

Indeed, our FPT algorithm relies on an FPT algorithm for a more general integer programming problem, which extends the existing FPT algorithm for the $n$-fold integer programming~\cite{hemmecke2013n}. We consider the following integer programming:
\begin{eqnarray}\label{eq:tree-fold}
	\min\{\vecc^T\vex: A\vex=\veb, \vel\le \vex\le \veu, \vex\in \mathbb{Z}^{nt}\},
\end{eqnarray}

In the $n$-fold integer programming, the matrix $A$ consists of small matrices $A_1$ and $A_2$ as follows (Here $A_1$ is an $s_1\times t$-matrix and $A_2$ is an $s_2\times t$-matrix).
\begin{eqnarray*}
	A=
	\begin{bmatrix}
		A_{1} & A_{1} & \dots  & A_{1} \\
		A_{2} & 0     & \dots    &  0 \\
		0     &   A_2  & \dots     & 0  \\
		\vdots& \vdots& \ddots   &\vdots\\
		0     &   0   & \dots  &  A_2  \\
	\end{bmatrix}
\end{eqnarray*}
More precisely, the matrix $A$ consists of one row of $(A_1,A_1,\cdots,A_1)$ and a submatrix with $A_2$ being at the main diagonal. We remark that throughout this paper $0$s that appear in a matrix refer to a submatrix consisting of the natural number $0$.

The $n$-fold integer programming has received many studies in the literature. Indeed, the natural ILP formulation of the scheduling and bin packing problem falls into an $n$-fold integer programming, as is observed by Knop and Kouteck{\'y}~\cite{knop2016scheduling}. In 2013, Hemmecke, Onn and Romanchuk presented an FPT algorithm for $n$-fold integer programming with the running time of $f(s_1,s_2,||A||_{\infty})n^3L$ where $f$ is some computable function, $||A||_{\infty}$ is the largest absolute value among all entries of $A$ and $L$ is the encoding length of the problem. This algorithm implies an FPT algorithm parameterized by the largest job processing time for $P||C_{max}$ and many other scheduling problems~\cite{knop2016scheduling}. We further extend their result by considering a broader class of integer programming, namely tree-fold integer programming as we describe as follows.

The structure of an $n$-fold matrix could be viewed as a star with the root representing the row of $(A_1,A_1,\cdots,A_1)$ and each leaf representing one of the rows $(0,\cdots,0,A_2,0,\cdots,0)$. More precisely, we can view each row $i$ as a vertex $i$ such that vertex $i$ is a parent of vertex $j$ if row $i$ dominates row $j$, where by saying row $i$ dominates row $j$, we mean row $j$ is more "sparse" than row $i$ as a vector, i.e., if the $k$-th coordinate of row $j$ is non-zero, then the $k$-th coordinate of row $i$ is also non-zero. Using this interpretation, we can generalize an $n$-fold matrix to a tree-fold matrix. The following is an example.

\begin{eqnarray*}
	A=
	\begin{bmatrix}

		A_1   & A_{1} & A_{1} & A_1 & A_1 & A_1  & A_{1} & A_1 & A_1 & A_1 & A_1  & A_1 \\
		A_{2} & A_{2} & A_{2} & A_2 & A_2 & A_2  & A_{2} & A_2   & 0   &  0  & 0  & 0  \\
		0     & 0     &   0   & 0  & 0  & 0 & 0 & 0  & A_2 & A_2 & A_2   & A_2 \\
		A_{3} & A_{3} &   A_3   & 0 & 0  & 0     & 0   & 0   &  0  & 0 & 0  & 0  \\
		0     & 0     &   0 & A_3  & A_3   & 0   & 0   &  0  & 0 & 0  & 0 & 0 \\
		0     & 0     &   0   & 0  & 0     & A_3 & A_3 &  A_3  & 0  & 0 & 0 & 0 \\
	0     & 0     &   0   & 0  & 0     & 0 & 0 &  0  & A_3  & A_3 & A_3 & A_3 \\
	A_4     & 0     &   0   & 0  & 0     & 0 & 0 &  0  & 0  & 0 & 0 & 0 \\
	0    & A_4     &   0   & 0  & 0     & 0 & 0 &  0  & 0  & 0 & 0 & 0 \\
	0    & 0     &   A_4   & 0  & 0     & 0 & 0 &  0  & 0  & 0 & 0 & 0 \\
		0    & 0     &   0   & A_4  & 0     & 0 & 0 &  0  & 0  & 0 & 0 & 0 \\
			0    & 0     &   0   & 0  & A_4     & 0 & 0 &  0  & 0  & 0 & 0 & 0 \\
				0    & 0     &   0   & 0  & 0     & A_4 & 0 &  0  & 0  & 0 & 0 & 0 \\
		0    & 0     &   0   & 0  & 0    & 0 & A_4 &  0  & 0  & 0 & 0 & 0 \\		
		0    & 0     &   0   & 0  & 0    & 0 & 0 &  A_4  & 0  & 0 & 0 & 0 \\	
			0    & 0     &   0   & 0  & 0    & 0 & 0 &  0  & A_4  & 0 & 0 & 0 \\	
		0    & 0     &   0   & 0  & 0    & 0 & 0 &  0  & 0  & A_4 & 0 & 0 \\		
		0    & 0     &   0   & 0  & 0    & 0 & 0 &  0  & 0  & 0 & A_4 & 0 \\					
			0    & 0     &   0   & 0  & 0     & 0 & 0 &  0  & 0  & 0 & 0 & A_4 \\
	\end{bmatrix}
\end{eqnarray*}
A tree-representation of the matrix above is:

\begin{center}\label{fig:tree}
	\includegraphics[scale=0.4]{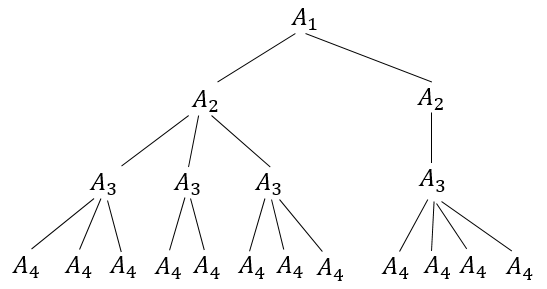}
\end{center}


In general, a tree-fold matrix $A$ consists of $n$ copies of small matrices $A_1$, $A_2$, $\cdots$, $A_{\tau}$ with $A_i$ being an $s_i\times t$-matrix. Every row consists of $0$'s and some $A_i$'s in the form of $(0,\cdots,0,A_i, A_i,\cdots,A_i,0,\cdots,0)$ (i.e., $A_i$ appears consecutively). Every column consists of $0$'s and exactly one copy of each $A_i$. Furthermore, if we call a row containing $A_i$ as an $A_i$-row, then any $A_i$-row is dominated by some $A_{i-1}$-row, that is, if at a certain row $A_i$ appears consecutively from column $\ell$ to column $k$, then there exists some $A_{i-1}$-row such that $A_{i-1}$ appears consecutively from $\ell'$ to $k'$ such that $\ell'\le \ell<k\le k'$. Representing the matrix as a tree, every row is represented as a vertex and the vertex corresponding to each $A_{i-1}$-row will be the parent of the vertex corresponding to $A_i$-row it dominates.

To facilitate the analysis, we further require that the $A_1$-row contains no $0$ and every $A_\tau$-row contains exactly one copy of $A_\tau$, that is, all rows containing $A_{\tau}$ form a sub-matrix with $A_\tau$ being at the diagonal. Note that this assumption causes no loss of generality: If it is not the case, we can always add a set of dummy constraints: $0\cdot \vex=0$, whereas $A_1$ and $A_\tau$ become a $1\times t$-dummy matrix consisting of $0$.


We define ILP~(\ref{eq:tree-fold}) with $A$ being a tree-fold matrix as a tree-fold integer programming and establish the following FPT result.

\begin{theorem}\label{thm:main-ILP}
	For some computable function $f$, there exists an FPT algorithm of running time $f(t,s_1,s_2,\cdots,s_\tau,||A||_{\infty})n^3L$ for a tree-fold integer programming, where $||A||_{\infty}$ is the largest absolute value among all entries of $A$, and $L$ is the length of the binary encoding of the vector $(\vecc, \veb,\vel,\veu)$.
\end{theorem}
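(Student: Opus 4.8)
The plan is to follow the Graver-basis augmentation paradigm of Hemmecke, Onn and Romanchuk~\cite{hemmecke2013n}, replacing their single-level analysis of the $n$-fold matrix by one that recurses on the depth $\tau$ of the tree. Recall that the \emph{Graver basis} $\mathcal{G}(A)$ is the set of $\sqsubseteq$-minimal nonzero vectors of $\ker A\cap\mathbb{Z}^{nt}$ under the conformal (sign-compatible) order $\sqsubseteq$, and that two ingredients suffice for an FPT algorithm: (i) a bound, independent of $n$, on the \emph{type} (the number of nonzero length-$t$ bricks) and on the $\ell_\infty$-norm of every element of $\mathcal{G}(A)$; and (ii) an efficient procedure that, given a feasible $\vex$, finds a best improving step $\vex\leftarrow\vex+\lambda\vez$ along a Graver-bounded direction $\vez$. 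I would establish (i) as a structural lemma and then realize (ii) through a dynamic program over the tree.

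For the structural lemma I would induct on $\tau$. The key observation is that deleting the $A_1$-row turns $A$ into a block-diagonal matrix whose diagonal blocks $C_1,\dots,C_k$ are themselves tree-fold matrices of depth $\tau-1$ (with top level $A_2$), each supported on one of the maximal brick-intervals spanned by an $A_2$-row; these blocks may have different sizes. The inductive hypothesis gives, for every depth-$(\tau-1)$ tree-fold matrix, Graver elements of type $\le g_{\tau-1}$ and norm $\le h_{\tau-1}$, with bounds depending only on $t,s_2,\dots,s_\tau,\|A\|_\infty$. Now take a Graver element $\vex$ of $A$ and restrict it to each block $j$: since $\vex|_j\in\ker C_j$, it decomposes conformally into ``pieces'' from $\mathcal{G}(C_j)$, each of type $\le g_{\tau-1}$ and bounded norm, so each piece has image $A_1(\text{piece})\in\mathbb{Z}^{s_1}$ lying among finitely many vectors of norm $\le g_{\tau-1}\cdot t\cdot\|A\|_\infty$. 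The $A_1$-row constraint says the multiset of piece-images sums to $0$; letting $H$ be the matrix whose columns are the possible image vectors, the piece-multiplicities form a nonnegative integer point of $\ker H$. If that point were not $\sqsubseteq$-minimal in $\ker H$, a conformal sub-multiset of pieces would sum to a nonzero element of $\ker A$ conformal to and smaller than $\vex$, contradicting minimality. Hence the number of pieces is at most $g(H):=\max_{\veu\in\mathcal{G}(H)}\|\veu\|_1$, and the type of $\vex$ is at most $g_\tau:=g(H)\cdot g_{\tau-1}$; a parallel argument bounds $\|\vex\|_\infty$ by some $h_\tau$, and the base case $\tau=1$ is immediate since then $A=A_1$ acts on a single brick.

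I expect this inductive step to be the main obstacle. One must check that the conformal decomposition inside each block, the passage to the finite image matrix $H$, and the minimality/cancellation argument all operate simultaneously across blocks of differing sizes, and, crucially, that the resulting bounds $g_\tau,h_\tau$ depend only on $(t,s_1,\dots,s_\tau,\|A\|_\infty)$ and not on $n$ or $k$. This is precisely where the nested structure of tree-fold matrices, as opposed to the flat $n$-fold case, has to be exploited, and where I would spend the bulk of the technical effort.

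For the algorithm I would first obtain an initial feasible $\vex$ by solving an auxiliary tree-fold feasibility instance of the same shape, and then iterate Graver-best augmentation steps. By the structural lemma every useful direction $\vez$ may be taken in the box $\{-h_\tau,\dots,h_\tau\}^{nt}$ with at most $g_\tau$ nonzero bricks, so the best such step is computable by a dynamic program that processes the tree from the leaves to the root: at each node we maintain the set of attainable partial right-hand-side values of that node's constraint block (a subset of a bounded box in $\mathbb{Z}^{s_i}$, hence of size bounded by a function of the parameters) together with the least cost achieving each value, and we merge the children's tables at the parent. Because the table size at every node is bounded by $f(t,s_1,\dots,s_\tau,\|A\|_\infty)$, one augmentation step costs $f(\cdot)\,n$ time; combining this with the bound on the number of augmentation steps inherited from the geometric-improvement analysis of the $n$-fold case yields the claimed running time $f(t,s_1,\dots,s_\tau,\|A\|_\infty)\,n^3L$.
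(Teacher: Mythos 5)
Your overall architecture --- Graver-best augmentation, a structural lemma bounding Graver elements of $A$ independently of $n$, and a dynamic program over bricks to realize one augmentation step --- is the same as the paper's, but your proof of the structural lemma takes a genuinely different route. The paper works \emph{bottom-up}: it decomposes each brick $\veg^i$ over $\mathcal{G}(A_\tau)$ via the positive sum property, then compresses level by level, forming $A_k'=A_k\veG(A_\tau)\veG(A_{\tau-1}')\cdots\veG(A_{k+1}')$ whose columns index Graver elements of the previously compressed matrix, and obtains $n$-independence from a single minimality claim at the root ($\veQ^i(A_2')\in\G(A_1')$), unfolded back down the tree to contradict $\sqsubseteq$-minimality of $\veg$. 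You work \emph{top-down}: delete the $A_1$-row, observe that the residual matrix is block diagonal with depth-$(\tau-1)$ tree-fold blocks, decompose conformally inside each block by induction, and bound the number of pieces via the Graver basis of a finite image matrix $H$. Both proofs hinge on the same cancellation principle (a proper nonempty conformal sub-collection of pieces would be a smaller nonzero conformal kernel element), and both yield bounds depending only on $(t,s_1,\dots,s_\tau,\|A\|_\infty)$; your induction gives type and $\ell_\infty$ bounds directly, while the paper's construction yields the explicit set $H(A)$ (sums of at most $\lambda$ elements of $\mathcal{G}(A_\tau)$), in which it phrases both its dynamic program and its step-length lemma --- and, notably, the paper's lemma is stated for \emph{all partial brick sums} $\sum_{i\in I}\veg^i$, which is what its DP actually consumes; your type/norm bounds imply the bounded-box version of this and so suffice. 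One harmless slip: a piece's image under the top row is bounded by $g_{\tau-1}\,t\,h_{\tau-1}\|A\|_\infty$, not $g_{\tau-1}\,t\,\|A\|_\infty$.

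The one substantive omission is the step length. Graver-best augmentation optimizes over both the direction $\veg\in\G(A)$ and the multiplier $\gamma\in\mathbb{Z}_+$, and your DP is only well-defined for a \emph{fixed} $\gamma$, since the feasibility constraints $\vel\le\vex+\gamma\vez\le\veu$ depend on it; leaving $\gamma$ free couples it multiplicatively with every brick. The paper resolves this with its Lemma~\ref{lemma:cite}: a candidate set $\Gamma$ with $\gamma^*\in\Gamma$ and $|\Gamma|\le n|H(A)|$, obtained by maximizing the step length brick by brick over the finitely many possible bricks; the DP is then run once per $\gamma\in\Gamma$, costing $O(n^2)$ per augmentation, which with the $O(nL)$ bound on the number of augmentations gives the stated $f(\cdot)n^3L$. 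Your own accounting --- $f(\cdot)\,n$ per augmentation times $O(nL)$ steps --- would give $n^2L$ and silently assumes the best $\gamma$ comes for free; the extra factor of $n$ in the theorem is precisely the enumeration of $\Gamma$. The fix transfers verbatim to your bounded-box set of candidate bricks. (You should also spell out that the auxiliary instance for the initial solution preserves the tree-fold shape --- append identity and negated-identity columns to each $A_k$ as in the paper's appendix --- but that part is routine.)
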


Note that $||A||_{\infty}=\max_j\{||A_j||_{\infty}\}$, thus the FPT term $f(t,s_1,s_2,\cdots,s_\tau,||A||_{\infty})$ only depends on the small matrices and does not rely on the structure of $A$. We also remark that, by introducing slack variables for inequalities, our theorem also holds for the integer programming:
	$\min\{\vecc^T\vex: A\vex\le \veb, \vel\le \vex\le \veu, \vex\in \mathbb{Z}^{nt}\}$.
	
	\vspace{2mm}
\noindent\textbf{Related work.} As we have mentioned, the problem of covering a star with stars is exactly the identical machine scheduling problem $P||C_{max}$. Approximation schemes are studied in a series of prior papers, see, e.g.,~\cite{chen2014optimality, sahni1976algorithms, hochbaum1987using, JansenKV16,jansen2010technical}. In terms of FPT algorithms, Mnich and Wiese~\cite{mnich2015scheduling} showed that $P||C_{max}$ is FPT parameterized by the largest job processing time (edge weight). Very recently, Knop and Kouteck{\'y}~\cite{knop2016scheduling} observes the relationship between the scheduling problem and $n$-fold integer programming in terms of FPT algorithms. Indeed, they show that a variety of scheduling problems, including $P||C_{max}$, could be formulated as an $n$-fold integer programming. Applying the FPT algorithm for $n$-fold integer programming by Hemmecke, Onn and Romanchuk~\cite{hemmecke2013n}, an FPT algorithm for $P||C_{max}$ follows. It is worth mentioning that parameterized studies for integer programming that has a sparse structure have received much attention in the literature, e.g.,~\cite{jansen2015structural,kratsch2016polynomial}.

Covering a tree with subtrees is much more complicated. In 2013, Xu et al.~\cite{xu2013exact} showed that if the number of subtrees, $m$, is a constant, then the problem admits a pseudo-polynomial time exact algorithm and an FPTAS. We are not aware of FPT algorithms for this problem. 



\section{The FPT algorithm}\label{sec:fpt}
In this section, we show that the subtree cover problem is FPT parameterized by the makespan. Towards this, we formulate the problem as an ILP. We observe that the ILP we establish has a special structure, which generalizes the $n$-fold integer programming studied in the literature. We call it as a tree-fold integer programming. Indeed, when the input tree is a star, the tree-fold integer program we formulate becomes an $n$-fold integer program. We extend the FPT algorithm for the $n$-fold integer programming to derive an FPT algorithm for the tree-fold integer programming, which implies an FPT algorithm for the subtree cover problem. This result may be of separate interest.

Recall that when the given graph is a star, the subtree cover problem becomes FPT parameterized by the largest edge weight $w_{max}=\max_j\{w_j|1\le j\le n\}$~\cite{mnich2015scheduling}. However, this is no longer true even if the given graph is a tree of height $2$, as is implied by the following theorem.

\begin{theorem}\label{thm:hardness}
	The subtree cover problem remains NP-hard even if the given tree is of height $2$ and every edge has unit weight.
\end{theorem}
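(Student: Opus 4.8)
The plan is to prove NP-hardness of the subtree cover problem on height-2 unit-weight trees by reducing from a strongly NP-hard scheduling or partitioning problem. Since every edge has unit weight, the weight of a subtree equals its number of edges, so the makespan is purely combinatorial: we are asking to cover the tree with $m$ rooted subtrees so that the maximum edge-count is at most some bound $B$. The natural source problem is \textsc{3-Partition} (or the closely related bin-packing-style problem), which is strongly NP-hard and therefore remains hard even when all item sizes are encoded in unary — this is essential, because with unit edge weights we cannot encode large numbers via weights and must instead encode item sizes via the number of edges.

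The construction I would use is as follows. Given a \textsc{3-Partition} instance with $3k$ items of sizes $a_1,\dots,a_{3k}$ (with $\sum_j a_j = kB$ and each $a_j \in (B/4, B/2)$, so that every feasible part has exactly three items summing to $B$), I would build a tree of height $2$. The root $r$ has $3k$ children $v_1,\dots,v_{3k}$, one per item. Each child $v_j$ is in turn the parent of $a_j - 1$ leaves. Thus the subtree hanging below $v_j$ (including the edge $r v_j$) contains exactly $a_j$ edges and represents item $j$. I would then set the number of salesmen/subtrees to $m = k$ and ask whether a cover of makespan exactly $B$ exists. The key structural observation to verify is that because each subtree $T_i$ is \emph{rooted at $r$} and connected, if $T_i$ uses any of the $a_j-1$ deep leaf-edges under $v_j$ it must also pay for the edge $r v_j$; more importantly, a single item's gadget cannot be profitably split across two subtrees without wasting edges on duplicated $r v_j$ edges, so at the optimum each item gadget is covered wholly by one subtree.

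The correctness argument then splits into the two standard directions. For the forward direction, a valid partition of the $3k$ items into $k$ triples each summing to $B$ yields a cover where subtree $T_i$ is the union of the three item-gadgets in the $i$-th triple, giving $w(T_i) = B$ exactly. For the reverse direction, I would argue that any cover of makespan $B$ must assign each item gadget entirely to one subtree: a gadget split between two subtrees forces both to include the edge $r v_j$, so the total edge count across the cover would exceed $\sum_j a_j = kB$, which is impossible since $k$ subtrees of makespan $B$ can cover at most $kB$ edges in total and every edge must be covered. Hence each $T_i$ receives a set of whole gadgets of total size at most $B$, and since the total is exactly $kB$ over $k$ subtrees, each receives exactly $B$, inducing a valid \textsc{3-Partition}.

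The main obstacle — and the step deserving the most care — is the reverse direction's ``no splitting'' argument, i.e., ruling out the possibility that a clever overlapping of rooted subtrees beats the naive bound. One must handle the fact that the subtrees may share edges (each edge need only be covered by \emph{some} subtree, and multiple subtrees are allowed to cover the same edge), so the clean tight counting $\sum_i w(T_i) \ge kB$ only holds if no edge is covered more than once. The crux is therefore to show that at an optimal (makespan-$B$) solution we may assume edges are covered without redundancy: any covering edge shared by two subtrees can be removed from one of them while keeping it connected and rooted, which can only decrease weights. Once this exchange argument establishes a partition (rather than merely a cover) of the edge set into $k$ rooted subtrees each of weight at most $B$, the tightness $\sum w(T_i) = kB$ forces equality everywhere and the \textsc{3-Partition} structure falls out. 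I would present this exchange/pruning lemma explicitly, since it is where the rooted-subtree geometry genuinely interacts with the counting.
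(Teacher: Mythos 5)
Your reduction is essentially the paper's own proof: both reduce from the strongly NP-hard \textsc{3-Partition} to a height-$2$ tree in which the root's $j$-th child carries an item gadget of $a_j$ total edges (the paper gives $v_j$ exactly $a_j$ children and tests makespan $B+3$; you give it $a_j-1$ leaves and test $B$ --- a cosmetic difference), and both close the reverse direction with the same tight counting: covering every vertex forces $\sum_i w(T_i)$ to be at least the total edge weight, while the makespan cap forces it to be at most that, so every edge lies in exactly one subtree and no gadget can be split, after which the bounds $B/4<a_j<B/2$ yield the partition. One remark: your proposed exchange/pruning lemma in the last paragraph is both unnecessary --- the counting alone already excludes any doubly covered edge at the tight bound, as your middle paragraph correctly argues --- and, as stated, false for an internal edge $rv_j$ whose leaf edges below $v_j$ are split between two subtrees (neither copy of $rv_j$ can then be removed without disconnecting a subtree), so you should simply omit that step.
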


The above hardness result excludes FPT algorithms parameterized by edge weight and tree height, and therefore we restrict our attention to makespan. We will first show that a tree-fold integer programming can be solved in FPT time. Then we establish a configuration ILP for the subtree cover problem and prove that the ILP falls exactly into the category of tree-fold integer programming, and is thus solvable in FPT time.




\subsection{Tree-fold integer programming}
The goal of this and next subsection is to prove Theorem~\ref{thm:main-ILP}. Towards this, we first introduce some basic concepts and techniques which are crucial for our proof. Here we only give a very brief introduction and the reader may refer to Appendix~\ref{ap-sec:pre-tree} for details.

We consider the following integer programming with $A$ being a tree-fold matrix consisting of $n$ copies of $s_i\times t$-matrix $A_i$, where $i=1,2,\cdots,\tau$.
\begin{eqnarray}\label{eq:IP-main}
\min\{\vecc^T\vex: A\vex=\veb, \vel\le \vex\le \veu, \vex\in \mathbb{Z}^{nt}\},
\end{eqnarray}

Any vector $\vex\in \mathbb{Z}^{nt}$ can be written into $n$ "bricks" in the form of $(\vex^1,\vex^2,\cdots,\vex^n)$ where $\vex^i\in \mathbb{Z}^{t}$. Using the standard technique, we can prove that if we have an algorithm for a tree-fold ILP such that given a feasible initial solution, it can augment it into an optimal solution, then by using this algorithm as a subroutine we can also solve the tree-fold ILP without knowing the initial solution (see Appendix~\ref{ap-sec:initial-solution}). Therefore, it suffices to focus on the "augmenting" algorithm. It is easy to see that all the vectors that can be used to augment a feasible solution $\vex$ to $\vex+\veq$ should satisfy that $A\veq=0$. It is shown by Graver~\cite{graver1975foundations} that instead of considering all the $\veq\in Ker(A)$, it suffices to consider a subset $\mathcal{G}(A)$, which is called Graver basis. Hemmecke, Onn and Weismantel~\cite{hemmecke2011polynomial} proved that, starting from an arbitrary feasible solution $\vex_0$, the optimal solution $\vex^*$ could be achieved by iteratively applying the {\em best augmentation via Graver basis}, i.e., augmenting $\vex$ by using the best possible augmentation vector of the form $\gamma\veg$, where $\gamma\in\mathbb{Z}_+$ and $\veg\in \mathcal{G}(A)$. The total number of augmentation steps needed is bounded by $O(nL)$, where $L$ is the length of the binary encoding of the vector $(\vecc,\veb,\vel,\veu)$\footnote{It should be noted that the best augmentation via Graver basis needs not be the best augmentation (i.e., there may exist $\veq$ such that $\vex+\veq$ is better than any $\vex+\gamma\veg$.}. This statement remains true if, instead of choosing the best possible augmentation vector of the form $\gamma\veg$, say, $\gamma^*\veg^*$, we choose an augmentation vector $\veq$ which is at least as good as $\gamma^*\veg^*$ in every augmentation step. That is, if in each augmentation step we choose an augmentation vector $\veq$ such that $\vecc^T\veq\le \gamma^*\vecc^T\veg^*$, then the optimal solution $\vex^*$ could also be achieved after $O(nL)$ augmentation vectors~\cite{hemmecke2013n,de2013algebraic}. Notice that $\veq$ does not necessarily belong to $\G(A)$. Such an augmentation is called a Graver-best augmentation and such greedy algorithm is called {\em Graver-best augmentation algorithm}. 

As we have described above, the problem of solving a tree-fold integer programming reduces to the problem that, given a feasible solution, finding an augmentation vector that is at least as good as the best augmentation via Graver basis. Towards this, it is crucial to understand the structure of the Graver basis for $A$. The following lemma provides such structural information and is crucial to our algorithm.

\begin{lemma}\label{le:fitness}
	Let $A=T[A_1,A_2,\cdots,A_{\tau}]$. There exists some integer $\lambda=\lambda(A_1,A_2,\cdots, A_{\tau})$ that only depends on matrices $A_1$ $A_2$, $\cdots$, $A_{\tau}$, and
	$$H(A)=\{\veh\in \mathbb{Z}^t| \veh \text{ is the sum of at most $\lambda$ elements of } \mathcal{G}(A_{\tau})\},$$
	such that for any $\veg=(\veg^1,\veg^2,\cdots,\veg^n)\in\mathcal{G}(A)$
	we have $\sum_{i\in I}\veg^i\in H(A)$ for any $I\subseteq \{1,2,\cdots,n\}$.
\end{lemma}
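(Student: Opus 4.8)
The plan is to prove a stronger, scale-free statement by induction on the tree depth $\tau$: for every tree-fold matrix $T[A_1,\ldots,A_\tau]$, of arbitrary shape and arbitrary number of bricks $n$, there is a constant $\lambda=\lambda(A_1,\ldots,A_\tau)$ such that every $\veg\in\mathcal{G}(T[A_1,\ldots,A_\tau])$ admits a conformal decomposition $\veg=\sum_k\veh_k$ into at most $\lambda$ elements $\veh_k\in\mathcal{G}(A_\tau)$. This reduces the lemma to a ``bounded $\mathcal{G}(A_\tau)$-content'' claim: since every column block carries exactly one copy of $A_\tau$, each $\veh_k\in\mathbb{Z}^t$ is supported on a single brick, so for any $I\subseteq\{1,\ldots,n\}$ the partial sum $\sum_{i\in I}\veg^i$ is precisely the sum of those $\veh_k$ lying in bricks of $I$, hence a sum of at most $\lambda$ elements of $\mathcal{G}(A_\tau)$, and thus an element of $H(A)$.

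For the base case $\tau=1$ the matrix is a single copy of $A_1=A_\tau$ and $\lambda=1$ suffices. For the inductive step I would peel off the root. The $A_1$-row has no zero blocks, so it links all bricks; its children partition the bricks into groups $B_1,\ldots,B_D$, and the submatrix on $B_d$ is a depth-$(\tau-1)$ tree-fold matrix $T[A_2,\ldots,A_\tau]$. Writing $\veg^{B_d}$ for the restriction of $\veg$ to $B_d$, each $\veg^{B_d}$ lies in $\ker T[A_2,\ldots,A_\tau]$ (the only row of $A$ not internal to a single $B_d$ is the root row), hence has a conformal decomposition $\veg^{B_d}=\sum_m\vew_{dm}$ with $\vew_{dm}\in\mathcal{G}(T[A_2,\ldots,A_\tau])$. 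By the induction hypothesis each $\vew_{dm}$ has $\mathcal{G}(A_\tau)$-content at most $\lambda':=\lambda(A_2,\ldots,A_\tau)$, so the content of $\veg$ is at most $\lambda'$ times the total number $N$ of pieces $\vew_{dm}$. Everything therefore comes down to bounding $N$ by a constant depending only on the matrices.

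To bound $N$ I would run the Graver-complexity argument one level up. For a piece $\vew$, let its \emph{root signature} be $\sigma(\vew):=A_1\big(\sum_i\vew^i\big)\in\mathbb{Z}^{s_1}$. The induction hypothesis bounds the $\mathcal{G}(A_\tau)$-content of $\vew$ by $\lambda'$, hence bounds $\|\sum_i\vew^i\|_1$ and therefore $\|\sigma(\vew)\|_\infty$ by a constant; consequently the set $\Sigma\subseteq\mathbb{Z}^{s_1}$ of signatures that can occur, across all subtree shapes, is finite. The root equation $A_1\sum_i\veg^i=0$ says exactly that the multiset $\{\sigma(\vew_{dm})\}$ sums to zero, i.e. the vector $\vec\lambda$ of its multiplicities satisfies $[\Sigma]\vec\lambda=0$ and $\vec\lambda\in\mathbb{Z}^{|\Sigma|}_{\ge0}$, where $[\Sigma]$ is the matrix whose columns are the elements of $\Sigma$. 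Set $g(\Sigma):=\max\{\|\veq\|_1:\veq\in\mathcal{G}([\Sigma])\}$. If $N=\|\vec\lambda\|_1>g(\Sigma)$, then $\vec\lambda$ is not $\sqsubseteq$-minimal and splits off a proper nonzero $0\le\vec\lambda'\lneq\vec\lambda$ with $[\Sigma]\vec\lambda'=0$. Selecting the corresponding sub-multiset of pieces gives $\veg':=\sum_{\mathrm{selected}}\vew_{dm}$, and I would verify that $\veg'$ is a genuine conformal splitter of $\veg$: it lies in $\ker A$, since each $B_d$-restriction is a sum of kernel elements of the subtree and $A_1\sum_i(\veg')^i=\sum_{\mathrm{selected}}\sigma(\vew_{dm})=0$; and because every $\vew_{dm}\sqsubseteq\veg$ sits in the single orthant of $\veg$, both $\veg'$ and $\veg-\veg'$ are nonzero and conformal to $\veg$. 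This contradicts $\veg\in\mathcal{G}(A)$, so $N\le g(\Sigma)$ and we may take $\lambda=\lambda'\,g(\Sigma)$.

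I expect the main obstacle to be precisely that the subtrees hanging off the root need not be isomorphic, so the outer layer is not literally an $n$-fold product and the $n$-fold Graver-complexity bound cannot be quoted verbatim. The device that rescues the argument is the inductive content bound: it confines the root signatures $\sigma(\vew)$ to a fixed finite set $\Sigma$ that is independent of $n$ and of the individual subtree shapes, which is exactly what makes the constant $g(\Sigma)$ finite and lets the zero-sum splitting go through. A secondary point needing care is checking that the split is proper at the level of vectors, not merely of multiplicity vectors, which uses the sign-compatibility of the pieces $\vew_{dm}$ with $\veg$.
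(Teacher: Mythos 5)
Your proof is correct, and it establishes exactly the quantitative claim at the heart of the paper's argument --- that the $\mathcal{G}(A_\tau)$-content of any $\veg\in\G(A)$, i.e.\ $\sum_i\|\veq^i(A_\tau)\|_1$ in the paper's notation, is bounded by a constant depending only on $A_1,\ldots,A_\tau$ --- but by a genuinely different organization. The paper works bottom-up in a single pass: it decomposes every brick over $\G(A_\tau)$ via the positive sum property, aggregates the coefficient vectors level by level, and compresses them through explicitly constructed matrices $A_k'=A_k\veG(A_\tau)\veG(A_{\tau-1}')\cdots\veG(A_{k+1}')$; Graver minimality is invoked only once, at the very top, in the Claim that the root-level aggregate $\veQ^i(A_2')$ lies in $\G(A_1')$, whose proof pulls a conformal splitter back down through all $\tau$ levels simultaneously, and the final bound is the explicit telescoping product $\lambda=\|\veG(A_{\tau-1}')\|_1\cdots\|\veG(A_2')\|_1\,\|\veQ^i(A_2')\|_1$. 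You instead induct on the depth $\tau$, peel the root, and bound the number of depth-$(\tau-1)$ Graver pieces by the Graver complexity $g(\Sigma)$ of a finite signature matrix, via a purely local, one-level splitting contradiction; your $[\Sigma]$ plays precisely the role of the paper's $A_1'$, whose columns likewise encode $A_1$ applied to compressed sums of lower-level Graver elements, and your recursion $\lambda(A_1,\ldots,A_\tau)\le g(\Sigma)\cdot\lambda(A_2,\ldots,A_\tau)$ mirrors the paper's norm product. Your organization buys two real advantages: the heterogeneity of the subtrees hanging off the root is absorbed automatically because the induction hypothesis is quantified over all shapes (the point you correctly flag as the crux; the paper handles it with the heavy nested-index bookkeeping of the sets $S_k^\ell$), and the conformality verification is confined to one level instead of being threaded through a $\tau$-fold chain of pull-back equations. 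What the paper's version buys is a concrete closed-form $\lambda$ in terms of computable Graver-basis norms, which feeds directly into the bounds $H_{max}$ and $Q_h$ used by the dynamic program. Both proofs ultimately rest on the same engine --- the positive sum property plus the manufacture of a proper nonzero conformal kernel element contradicting $\sqsubseteq$-minimality --- and your second flagged caveat (properness of the split at the level of vectors, via sign-compatibility of the pieces) is exactly the step the paper secures with its observation that $q_j^i(A_\tau)>0$ forces $\veg^j(A_\tau)$ into the orthant of $\veg^i$.
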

Here $A=T[A_1,A_2,\cdots,A_{\tau}]$ means $A$ is a tree-fold matrix consisting of $A_1$, $\cdots$, $A_\tau$. Roughly speaking, Lemma~\ref{le:fitness} states that for any Graver basis element $\veg$ of the matrix $A$, although it is of a very high dimension, it is sparse, i.e., among the $n$ bricks $\veg^1,\veg^2,\cdots,\veg^n$, only an \lq\lq FPT\rq\rq\, number of them can be nonzero. This lemma extends the structural lemma for $n$-fold integer programming in~\cite{hemmecke2013n}, which can be viewed as the case when $\tau=2$. The proof of Lemma~\ref{le:fitness} is involved and is deferred to Appendix~\ref{ap-sec:fitness}.



\subsection{Dynamic programming in FPT time}
We provide a dynamic programming algorithm running in FPT algorithm for the tree-fold integer programming, and Theorem~\ref{thm:main-ILP} follows. Towards this, we let $\lambda=\lambda(A_1,A_2,\cdots,A_{\tau})$ and $H(A)$ be defined as in Lemma~\ref{le:fitness}.

Given a feasible solution $\vex$ of the integer programming~(\ref{eq:IP-main}), let $\gamma^*\in\mathbb{Z}_+$, $\veg^*\in \G(A)$ satisfy that $\gamma^*\veg^*$ is the best augmentation among Graver basis, i.e., the best possible augmentation vector of the form $\gamma\veg$ where $\gamma\in\mathbb{Z}_+$ and $\veg\in\G(A)$. The following lemma from~\cite{hemmecke2013n} allows us to guess $\gamma^*$ in $O(n)$ time:

\begin{lemma}[\cite{hemmecke2013n}]\label{lemma:cite}
	In $O(n)$ time we can compute a set of integers $\Gamma$ such that $\gamma^*\in\Gamma$ and $|\Gamma|\le n|H(A)|$.
\end{lemma}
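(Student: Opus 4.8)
The plan is to exploit the fact that, once the improving direction $\veg^*$ is fixed, the optimal multiplier $\gamma^*$ is forced to be the largest feasible step length along $\veg^*$. Any augmentation of the form $\gamma\veg$ changes the objective by $\gamma\,\vecc^T\veg$, which is linear in $\gamma$; since $\gamma^*\veg^*$ is the best Graver augmentation and is genuinely improving, we must have $\vecc^T\veg^*<0$, so the objective strictly decreases as $\gamma$ grows. Hence $\gamma^*$ equals the maximal $\gamma\in\mathbb{Z}_+$ with $\vel\le\vex+\gamma\veg^*\le\veu$; call this quantity $\gamma_{\max}(\veg^*)$. It therefore suffices to produce a small set $\Gamma$ that is guaranteed to contain $\gamma_{\max}(\veg^*)$, without knowing $\veg^*$ itself.

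First I would write the feasibility condition brick by brick. Splitting $\vex=(\vex^1,\dots,\vex^n)$, $\vel$, $\veu$ and $\veg^*=(\veg^{*1},\dots,\veg^{*n})$ into bricks, the constraint $\vel\le\vex+\gamma\veg^*\le\veu$ decouples across bricks, so $\gamma_{\max}(\veg^*)=\min_{1\le i\le n}\gamma_{\max}^i(\veg^{*i})$, where $\gamma_{\max}^i(\veh)$ denotes the largest integer $\gamma$ with $\vel^i\le\vex^i+\gamma\veh\le\veu^i$. The minimum is attained at some binding brick $i_0$, and there $\gamma^*=\gamma_{\max}^{i_0}(\veg^{*i_0})$ is finite. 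The crucial input is Lemma~\ref{le:fitness}: applied with the singleton index set $I=\{i_0\}$, it gives $\veg^{*i_0}\in H(A)$. Thus the binding brick direction, whatever it is, lies in the \lq\lq FPT-sized\rq\rq\ set $H(A)$.

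This immediately suggests the construction. For every brick $i\in\{1,\dots,n\}$ and every candidate direction $\veh\in H(A)$, I would compute $\gamma_{\max}^i(\veh)$ by scanning the $t$ coordinates of brick $i$ (taking, for each nonzero entry $\veh_j$, the appropriate floor of $(\veu^i_j-\vex^i_j)/\veh_j$ or $(\vex^i_j-\vel^i_j)/(-\veh_j)$ according to the sign, and ignoring directions that vanish on brick $i$ as well as unbounded cases), and collect all resulting finite values into $\Gamma$. By the previous paragraph $\gamma^*=\gamma_{\max}^{i_0}(\veg^{*i_0})$ is one of these values, so $\gamma^*\in\Gamma$, while by construction $|\Gamma|\le n\,|H(A)|$. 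Each value costs $O(t)$ arithmetic operations, so the whole set is built in $O(n\,t\,|H(A)|)=O(n)$ time once $t$ and $|H(A)|$ are absorbed into the FPT factor.

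The routine parts are the sign bookkeeping in the floors and the handling of bricks on which $\veg^*$ vanishes; the one point that needs care---and the reason Lemma~\ref{le:fitness} is indispensable---is the argument that we may restrict attention to brick directions drawn from $H(A)$. A priori the binding brick $\veg^{*i_0}$ could be an arbitrary integer vector, and it is only the structural boundedness guaranteed by Lemma~\ref{le:fitness} that confines it to a set whose size depends solely on the small matrices $A_1,\dots,A_\tau$ and not on $n$ or the encoding length $L$.
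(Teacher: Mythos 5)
Your proof is correct and takes essentially the same route as the paper's: fix $\veg^*$, observe that $\gamma^*$ must be the maximal feasible step length along $\veg^*$ and that this maximum is attained at some binding brick $i_0$, invoke Lemma~\ref{le:fitness} with the singleton index set $I=\{i_0\}$ to conclude $\veg^{*i_0}\in H(A)$, and then enumerate the largest feasible step $\gamma^i_{\max}(\veh)$ over all $n$ bricks and all $\veh\in H(A)$, giving $\gamma^*\in\Gamma$ with $|\Gamma|\le n|H(A)|$ in $O(n)$ time. If anything, your write-up is slightly more precise than the paper's own (which loosely writes $\veg^*\in H(A)$ where it means that every brick $\veg^{*i}$ lies in $H(A)$), and your explicit treatment of vanishing or unbounded brick directions is a harmless extra detail.
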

The proof in~\cite{hemmecke2013n} is for the case when $\tau=2$, however, it works directly for the general tree-fold matrices. For the completeness of the paper we give the proof in Appendix~\ref{ap-sec:cite}.

In the following we give a dynamic programming algorithm such that given a feasible solution $\vex$ and any $\gamma\in\Gamma$, it finds out $\veh_{\gamma}\in H(A)$ that minimizes $\vecc^T(\vex+\gamma\veh_{\gamma})$, or equivalently, minimizes $\vecc^T\veh_{\gamma}$ subject to the constraints that $\vel^i\le \vex^i+\gamma\veh_{\gamma}^i\le \veu^i$ and $A\veh_{\gamma}=0$. With such an algorithm, we can run it for every $\gamma\in\Gamma$ and pick $\gamma'$ such that $\vecc^T(\vex+\gamma'\veh_{\gamma'})$ is minimal. By the definition of $H(A)$, $\gamma'\veh_{\gamma'}$ is at least as good as the best augmentation via Graver basis and is thus the Graver-best augmentation that we desire. 

The dynamic programming works in stages where in each stage it solves a subproblem. To define the subproblem, we define a matrix $\bar{A}$ as follows. Consider any small matrix $A_i$ and all the rows in $A$ that contain $A_i$. Suppose $A_i$ appears consecutively in these rows from column $1=d_0^i$ to column $d_1^i$, from column $d_1^i+1$ to column $d_2^i$, $\cdots$, from column $d_{k-1}^i$ to column $d_k^i=n$. 
We define $\bar{A}$ where each row of $\bar{A}$ is the summation of some rows in $A$. More precisely, $\bar{A}$ contains the same number of rows as $A$. If in the $\ell$-th row of $A$ some small matrix $A_i$ appears consecutively from column $d_j^i$ to column $d_{j+1}^i$, then in the $\ell$-th row of $\bar{A}$ the small matrix $A_i$ appears consecutively from $1$ to $d_{j+1}^i$, that is, we construct $\bar{A}$ by extending the sequence of $A_i$ in each row of $A$ to column $1$. It is obvious that $A\veh=0$ if and only if $\bar{A}\veh=0$.

Let $\bar{A}[1],\bar{A}[2],\cdots$ be all the rows in $\bar{A}$. Let $ED_k$ be the set of rows $\bar{A}[\ell]$ where only the first $k$ columns are non-zero. Obviously $ED_k\subseteq ED_{k+1}$. Let $H_{max}=\max_{\veg\in\G(A)}||\veg||_1$ and $Q_h= \{z\in \mathbb{Z}^{s_h}: ||z||_{1}\le ||A_h||_1\cdot H_{max}\}$. According to Lemma~\ref{le:fitness}, $H_{max}$, and hence $||z||_1$ for any $z\in Q_h$, is only dependent on the submatrices $A_1,A_2,\cdots,A_{\tau}$. 
We define subproblem-$k$ as follows: 

For every $z_h\in Q_h$ where $1\le h\le \tau$, find some $\bar{\veh}_{\gamma}$ such that 
\begin{itemize}
	\item $\bar{\veh}_{\gamma}^i=0$ for $i>k$, that is, only the first $k$ bricks can be non-zero.
	\item $\bar{\veh}_{\gamma}\in H(A)$.
	\item $\vel^i\le \vex^i+\gamma\bar{\veh}_{\gamma}^i\le \ve u^i$ for $1\le i\le k$.
	\item $\bar{A}[\ell]\cdot \bar{\veh}_\gamma=0$ for any $\bar{A}[\ell]\in ED_k$.
	\item $\sum_{i}A_hx^i=z_h$,
	\item $\vecc^T\bar{\veh}_{\gamma}$ is minimized.
\end{itemize} 

It is easy to see that the optimal solution for the subproblem-$(k+1)$ can be constructed by extending the optimal solution for the subproblem-$k$ by one brick, and such a brick belongs to $H(A)$. Therefore, the optimal solution for subproblem-$n$ can be found in $O(n)$ time, where the big-O hides a coefficient that only depends on $A_1,A_2,\cdots,A_{\tau}$.

\vspace{1mm}
\noindent\textbf{The overall running time.} We have shown in this subsection that the dynamic programming algorithm can find out a Graver best augmentation in $O(n^2)$ time (ignoring all the FPT-terms). By~\cite{hemmecke2011polynomial} the number of Graver best augmentations needed is $O(nL)$ where $L$ is the encoding length of the integer programming, therefore tree-fold integer programming can be solved in $O(n^3L)$ time, and Theorem~\ref{thm:main-ILP} is proved (if a feasible initial solution is given).





\subsection{Subtree cover--integer programming formulation}
The goal of this subsection is to derive an ILP formulation of the subtree cover problem which falls into the category of tree-fold integer programming. Given this result, applying Theorem~\ref{thm:main-ILP}, Theorem~\ref{thm:main-fpt} is proved.

For ease of description, we let the root $r=v_1$. We define the unweighted distance between two vertices as the length the path connecting them in the same tree with all edge weights as $1$. The {\em depth} of any vertex $v_s$ is the unweighted distance of $v_s$ to $v_1$. 

\vspace{1mm}
\noindent\textbf{Preprocessing.} We consider the decision version of the problem which asks whether there exists a subtree cover of makespan $B$. We assume without loss of generality that the height of the tree, $h(T)$, is at most $B$, since otherwise we can conclude directly that there is no feasible solution of makespan at most $B$. For ease of presentation, we modify the problem in the following way. For any leaf whose depth is $h< h(T)$, we append a path to it which consists of $h(T)-h$ dummy vertices and $h(T)-h$ dummy edges of $0$ weight. By doing so every leaf of $T$ has a depth of $h(T)$. Next, we direct all the edges towards the root and move the weight of each edge to its source vertex. Specifically, the weight of the root is $0$. Now the weight of any subtree is simply the total weight of its vertices. For simplicity, we still denote the modified tree as $T$ and denote by $n$ the number of its vertices.

\vspace{1mm}
\noindent\textbf{Configurations.} We define configurations.
Any tree with at most $O(B^2)$ vertices whose weight is bounded by $B$ can be encoded via an $O(B^2)$-vector as follows: We index all vertices from $1$ to $O(B^2)$. For every vertex, we store its weight and its parent. We call such an $O(B^2)$-vector as a {\em configuration} and have the following simple observation.

\begin{observation}
	There are at most $\mu=B^{O(B^2)}$ different kinds of configurations.
\end{observation}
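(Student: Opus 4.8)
The plan is to count the number of distinct configurations by counting the number of possible $O(B^2)$-vectors that encode them. Recall that a configuration is a tree with at most $O(B^2)$ vertices, encoded by storing, for each of its (at most $O(B^2)$) vertices, two pieces of data: the vertex weight and the index of its parent. I would bound the number of choices for each coordinate of this encoding vector and then take the product over all coordinates.

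First I would bound the weight entries. Since the total weight of the tree is at most $B$ and all weights are nonnegative integers, each individual vertex weight lies in $\{0,1,\dots,B\}$, giving at most $B+1$ choices per vertex. Next I would bound the parent entries. Each vertex has a parent index among the $O(B^2)$ vertex indices (or a designated null value for the root), so there are at most $O(B^2)+1$ choices per vertex for the parent pointer. Multiplying the weight choices and parent choices for a single vertex gives at most $(B+1)\cdot(O(B^2)+1)$ possibilities; since both factors are $B^{O(1)}$, this is $B^{O(1)}$ per vertex.

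Finally, since the configuration vector has $O(B^2)$ coordinates (one weight and one parent pointer for each of the $O(B^2)$ vertices), the total number of configurations is at most $\left(B^{O(1)}\right)^{O(B^2)} = B^{O(B^2)}$, which is exactly the claimed bound $\mu = B^{O(B^2)}$. One should also allow for configurations using fewer than the maximum number of vertices, but encoding these by padding with dummy or unused vertex slots shows they are already subsumed in the count above, so the bound is unaffected.

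The main subtlety (rather than a genuine obstacle) is to confirm that the encoding is faithful and that the per-coordinate ranges are as claimed: specifically that weights really are bounded by $B$ (which follows from the makespan-$B$ restriction and nonnegativity of edge/vertex weights after the preprocessing step that moved edge weights onto vertices), and that $O(B^2)$ vertices suffice to describe any relevant subtree (which follows from the assumption $h(T)\le B$ together with the structure imposed during preprocessing). Once these ranges are pinned down, the bound is a direct product-counting argument and the absorption of all polynomial factors into the $O(\cdot)$ in the exponent makes the final expression clean.
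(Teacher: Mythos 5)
Your proof is correct and follows essentially the same reasoning the paper intends: the configuration is an $O(B^2)$-vector, each coordinate (a weight in $\{0,\dots,B\}$ or a parent index among $O(B^2)$ slots) admits only $B^{O(1)}$ values, so the product bound $\bigl(B^{O(1)}\bigr)^{O(B^2)} = B^{O(B^2)}$ follows directly. Your additional care about padding unused vertex slots and verifying the coordinate ranges is consistent with the paper's encoding and does not change the argument.
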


We index configurations arbitrarily as $CF_1,CF_2,\cdots,CF_{\mu}$ and denote by $|CF_j|$ the number of vertices in $CF_j$. Given an arbitrary configuration $CF_j$, we use $(CF_j,k)$ to denote its vertex of index $k\in \{1,2,\cdots,|CF_j|\}$. $k$ is also called the {\em location} of this vertex. Let $\zeta=O(B^2)$ be the maximal number of vertices among all the configurations. A pair $(CF_j,k)$ with $|CF_j|<k\le \zeta$ is called invalid. For simplicity, $1$ is always the index (location) of the root for every $CF_j$.

Given a configuration $CF_j$, we define a function $f_{j}$ which maps a vertex of location $k$ to the location of its parent (it shall be noted that here the function $f_j$ has nothing to do with the function $f$ in Theorem~\ref{thm:main-ILP}).

Now we revisit the subtree cover problem using the notion of configurations. Consider an arbitrary subtree of $T$ rooted at $r=v_1$ whose weight is at most $B$. We first observe that there are at most $O(B^2)$ vertices in the subtree. To see why, we can first consider a subtree of weight at most $B$ in the original tree before preprocessing. Since every vertex, except the root, has non-zero weight, the number of vertices is bounded by $B+1$. As the preprocessing procedure will append at most $h(T)\le B$ vertices below a vertex, the total number of vertices is thus bounded by $O(B^2)$. Hence, any subtree of weight at most $B$ can be mapped to a configuration. Furthermore, any feasible solution can be interpreted as $m$ subtrees that can be mapped to $m$ configurations. Using this idea, we now establish an ILP formulation of the problem. 

We define an integral variable $x_{i,(CF_j,k)}$ for every vertex $v_i$ and every pair $(CF_j,k)$. For $h\in\mathbb{Z}_+$, $x_{i,(CF_j,k)}=h$ implies that there are $h$ subtrees in the solution which contain $v_i$, and furthermore, each of them can be mapped to the configuration $CF_j$ such that $v_i$ is mapped to the location $k$ vertex in $CF_j$. 

Obviously, $v_i$ can not be mapped to an arbitrary vertex in $CF_j$.
We say a vertex $v_i$ is consistent with the pair $(CF_j,k)$, if both of the following conditions are true:
\begin{itemize}
	\item the depth of $v_i$ in $T$ is the same as the depth of the location $k$ vertex in $CF_j$;
	\item the weight of $v_i$ in $T$ is the same as the weight of the location $k$ vertex in $CF_j$.
\end{itemize}
Otherwise, we say they are inconsistent. 

Let $CH(v_i)$ be the set of children of $v_i$, $LF$ be the set of leaves. We establish the following $ILP(T)$ for the subtree cover problem:
		\begin{subequations}\label{eq:config-ILP}
			\begin{align*}
			&\quad\quad\min \sum_{j=1}^{\mu} x_{1,(CF_j,1)}&\\
			&(I)\quad \sum_{s:v_s\in CH(v_i)} x_{s,(CF_j,k)}=x_{i,(CF_j,f_{j}(k))},\quad &\forall\,1\le i\le n, 1\le j\le \mu, 1\le k\le \zeta\\
			&(II)\quad \sum_{j=1}^{\mu}\sum_{k=1}^{\zeta} x_{i,(CF_j,k)}=1, \quad& \forall\, v_i\in LF\\
			& (III)\quad x_{i,(CF_j,k)}=0, \quad& \textrm{if $v_i$ and $(CF_j,k)$ are inconsistent, or }|CF_j|<k\le \zeta\\ 
			& (IV)\quad  x_{i,(CF_j,k)}\in \mathbb{Z}_{\ge0}, \quad& 1\le i\le n, 1\le j\le \mu, 1\le k\le \zeta  
			\end{align*}
		\end{subequations}
Constraint $(II)$ ensures that every leaf is contained in one of the subtrees. Constraints $(III)$ and $(IV)$ are straightforward. We now explain constraint $(I)$. Consider any feasible solution and let $v_i$ be an arbitrary vertex. Let $v_s$ be any child of $v_i$. If $v_s$ is mapped to the vertex of location $k$ in $CF_j$, then $v_i$ must be mapped to the vertex of location $f_j(k)$ in $CF_j$. Therefore, if we consider the total number of configuration $CF_j$ where a child of $v_i$ is mapped to its vertex of location $k$, this should be equal to the number of configuration $CF_j$ where $v_i$ is mapped to its vertex of location $f_j(k)$. This is essentially what constraint $(I)$ implies.

The following two lemmas ensures that the $ILP(T)$ we have derived indeed solves the subtree cover problem. One direction (Lemma~\ref{lemma:solution-to-ILP}) is staightforward, yet the other direction is a bit involved and the reader is referred to Appendix~\ref{ap-sec:ILP-to-solution} for details.

\begin{lemma}\label{lemma:solution-to-ILP}
	If there exists a feasible solution of the scheduling problem with makespan at most $B$, then there exists a feasible solution of the ILP with the objective value at most $m$.
\end{lemma}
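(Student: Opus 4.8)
The plan is to build the ILP solution directly from the given family of subtrees and then check the four constraint groups one by one. Let $T_1,\dots,T_m$ be the subtrees of a feasible cover of makespan at most $B$, which after the preprocessing we regard as subtrees of the (preprocessed) tree $T$: each contains the root $v_1$, each has weight $w(T_\ell)\le B$, and together they cover $V$. First I would reduce to the case where the leaves are covered disjointly. If a leaf lies in two of the subtrees, delete it from one of them; deleting a leaf keeps the remaining object a rooted subtree and only decreases its weight, so the makespan stays at most $B$, and every internal vertex is an ancestor of some (now uniquely covered) leaf and hence is still covered. Thus I may assume each leaf of $T$ lies in exactly one $T_\ell$. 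Since $w(T_\ell)\le B$, the earlier counting shows $|V(T_\ell)|=O(B^2)$, so $T_\ell$ coincides with some configuration $CF_{j(\ell)}$ under a depth-, weight-, and parent-preserving identification of its vertices with the locations $1,\dots,|CF_{j(\ell)}|$, with $v_1$ sent to location $1$.

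Next I would define the variables by counting over this family:
$$x_{i,(CF_j,k)}=\#\{\ell:\ j(\ell)=j \text{ and } v_i \text{ occupies location } k \text{ under } T_\ell\cong CF_j\}.$$
These are nonnegative integers, giving $(IV)$. Because each identification preserves depth and weight, $x_{i,(CF_j,k)}>0$ forces $v_i$ and $(CF_j,k)$ to be consistent and $k\le|CF_j|$, which is exactly $(III)$. For the objective, $v_1$ is sent to location $1$ in every subtree, so $\sum_j x_{1,(CF_j,1)}$ counts all $m$ subtrees and equals $m$; hence the objective value is at most $m$.

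It then remains to verify $(I)$ and $(II)$. Constraint $(II)$ is immediate from the disjoint-leaf reduction: each leaf $v_i$ lies in a unique $T_\ell$ and occupies a unique location there, so $\sum_{j,k}x_{i,(CF_j,k)}=1$. For $(I)$ I would argue subtree by subtree. Fix $v_i$, $CF_j$, and a location $k$. In a single subtree $T_\ell$ with $j(\ell)=j$ the identification is a bijection onto the locations of $CF_j$, so $v_i$ occupies location $f_j(k)$ \emph{exactly when} the unique vertex occupying location $k$ is a child of $v_i$: the forward direction uses that the vertex at location $k$ has its parent at $f_j(k)$, and the reverse direction uses surjectivity of the identification together with injectivity to rule out two children at the same location. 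Summing this equivalence over all $\ell$ with $j(\ell)=j$ turns the left-hand count $\sum_{v_s\in CH(v_i)}x_{s,(CF_j,k)}$ into the right-hand count $x_{i,(CF_j,f_j(k))}$, which is $(I)$.

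The step I expect to be the main obstacle is the verification of $(I)$: one must make the bijectivity of each identification $T_\ell\cong CF_j$ work in both directions while handling the boundary locations carefully. The two cases to watch are the root location $k=1$, where $f_j$ applied to the root has no genuine meaning and the constraint should be read as excluded (equivalently, its left-hand side vanishes since no child vertex can be consistent with the root location of depth $0$), and the invalid pairs with $|CF_j|<k\le\zeta$, where $(III)$ already forces the relevant variables to $0$. Once these degenerate cases are set aside, the subtree-by-subtree counting closes the proof.
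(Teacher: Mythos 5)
Your proof is correct and is exactly the argument the paper intends: the paper declares this direction ``straightforward'' and omits the proof entirely, and the canonical construction is the one you give --- identify each subtree of the cover (extended at zero cost along the dummy paths introduced in preprocessing, so that dummy leaves are covered) with a configuration, and let $x_{i,(CF_j,k)}$ count the subtrees in which $v_i$ occupies location $k$ of $CF_j$, so that the objective equals $m$ and the per-subtree bijectivity of the identification yields constraint $(I)$. Your two additional care points --- pruning leaves that are covered more than once so that constraint $(II)$ holds with equality (necessary, since a feasible cover may visit a leaf twice while $(II)$ demands exactly one), and reading the $k=1$ and invalid-location instances of constraint $(I)$ as vacuous --- correctly fill small gaps in the paper's formulation rather than deviating from its approach.
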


\begin{lemma}\label{lemma:ILP-to-solution}
	If there exists a feasible solution of the ILP with the objective value at most $m$, then there exists a feasible solution of the subtree cover problem with makespan at most $B$.
\end{lemma}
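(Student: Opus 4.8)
The plan is to prove this direction constructively: from a feasible solution $\{x_{i,(CF_j,k)}\}$ of $ILP(T)$ with $\sum_j x_{1,(CF_j,1)}\le m$, explicitly assemble a family of at most $m$ rooted subtrees covering $V(T)$, each of weight at most $B$. The guiding idea is to read constraint $(I)$ as a flow-conservation law along the tree: for a fixed configuration $CF_j$ and a location $k$ at depth $d$, the quantity $x_{i,(CF_j,k)}$ is the number of subtree instances in which the depth-$d$ vertex $v_i$ plays the role of the location-$k$ vertex of $CF_j$, and $(I)$ says that each such instance hands down exactly one copy of every child-location of $k$ to the children of $v_i$ in $T$. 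Accordingly I would process $T$ top-down in order of increasing depth, starting at the root with $x_{1,(CF_j,1)}$ instances of each configuration $CF_j$, and at each vertex distribute the instances arriving there among its children according to the $x$-values, so that after reaching the leaves every instance has become a genuine subtree of $T$ isomorphic (respecting depth and weight) to its configuration. Since every configuration has weight at most $B$, every subtree built this way has weight at most $B$, their number is the objective value $\sum_j x_{1,(CF_j,1)}\le m$, and constraint $(II)$ guarantees every leaf of $T$ lies in exactly one instance.

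The crux — and the reason this direction is involved — is the \emph{distinctness} requirement at each vertex: within a single instance, the several child-locations $k'_1,\dots,k'_d$ of the location occupied by $v_i$ must be realized by \emph{distinct} children of $v_i$, because a subtree is an actual subgraph of $T$ and one child cannot sit in two positions at once. Constraint $(I)$ only enforces that, summed over children, each child-location $k'_a$ receives exactly $N:=x_{i,(CF_j,k)}$ copies; it does not by itself prevent a single child from being charged to two sibling locations of the same instance. I would model the assignment at $v_i$ as a bipartite multigraph between the child-locations $\{k'_a\}$ and the children $v_s\in CH(v_i)$ consistent with them, placing $x_{s,(CF_j,k'_a)}$ parallel edges between $k'_a$ and $v_s$. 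Constraint $(I)$ makes every location-node have degree exactly $N$, so once one shows every child-node has degree at most $N$, K\"onig's edge-coloring theorem partitions the edges into $N$ matchings each saturating every child-location, i.e. into exactly $N$ instances each filling all child-locations with distinct children. The delicate point is precisely the per-child degree bound $\sum_{a} x_{s,(CF_j,k'_a)}\le N$: it propagates downward through $(I)$ to the leaves, where $(II)$ caps the total charge on each leaf, and this is where care is needed, since an instance whose branch stops above depth $h(T)$ carries no leaf constraint. I would handle such instances by first normalizing the solution — discarding or re-routing instances all of whose vertices are already covered by other instances, and pruning branches that end at internal vertices — which never increases the objective and makes every surviving instance reach depth $h(T)$ along each branch, restoring the degree bound.

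Finally I would verify that covering all leaves suffices: every internal vertex of $T$ has a descendant leaf, and a subtree containing that leaf and the root $v_1$ contains the entire root-to-leaf path and hence the internal vertex, so $(II)$ forces $V(T)=\bigcup_i V(T_i)$. Undoing the preprocessing is immediate, since the appended dummy vertices and edges have weight $0$: deleting them leaves each subtree's weight unchanged and yields a cover of the original tree with makespan at most $B$. I expect the main obstacle to be exactly the consistent child-assignment together with the normalization that validates the degree bound; the counting, the weight bound, and the passage back to the original tree are routine.
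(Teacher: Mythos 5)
You correctly identify the delicate point, but the per-child degree bound on which your K\"onig edge-coloring step rests is a genuine gap --- in fact the bound is false, so the distinctness requirement you set out to enforce cannot be derived from feasibility of $ILP(T)$. Concretely, take $x_{i,(CF_{j},k)}=1$ and suppose $v_i$ has a single child $v_s$ that is consistent with two sibling locations $k'_1,k'_2$ of $k$ (same depth, and choose $CF_j$ so that both locations carry weight $w(v_s)$), with $x_{s,(CF_{j},k'_1)}=x_{s,(CF_{j},k'_2)}=1$ and all other children at zero. Constraint $(I)$ holds at both locations, and the solution extends feasibly to the leaves by routing the leaf-locations below $k'_1$ and below $k'_2$ to \emph{distinct} leaves of $T$ under $v_s$, each of which then satisfies $(II)$ with total mass exactly $1$. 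Here the child's degree in your bipartite multigraph is $2>N=1$, and no partition into matchings saturating all child-locations with distinct children exists. Your proposed normalization does not repair this: the instance is not redundant (it is the only one covering those leaves), there may be no other child consistent with $k'_2$ to re-route to, and the colliding branches do not end at internal vertices, so neither discarding nor pruning applies.

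The way out --- and the route the paper actually takes --- is to observe that distinctness is not needed at all. The lemma only asks for $m$ subtrees of weight at most $B$ covering $V(T)$, not for subtrees isomorphic to configurations: if one child realizes two sibling locations, the instance collapses to a subtree with \emph{fewer} vertices, and since each selected vertex is consistent with (hence has the weight of) at least one location of $CF_{j_0}$ while each location selects at most one vertex, the weight of the collapsed subtree is still at most $w(CF_{j_0})\le B$. The paper then proceeds by greedy extraction rather than your simultaneous top-down distribution: pick any $j_0$ with $x_{1,(CF_{j_0},1)}\ge 1$; walk down $T$, choosing for every child-location an arbitrary child with a positive variable (constraint $(I)$ guarantees one exists); the resulting induced subtree $T[H]$ has weight at most $B$ and its leaves are leaves of $T$. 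Delete those leaves --- by $(II)$ each carries variable value exactly $1$, so the residual solution obtained by decrementing the selected variables by one remains feasible for $ILP(T')$ with objective at most $m-1$ --- and induct. This peel-one-subtree argument sidesteps both your degree bound and your normalization; if you wish to keep your formulation, the fix is to drop the matching requirement and let a child absorb several sibling locations, with the weight accounting above closing the argument.
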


Still, $ILP(T)$ is similar but not exactly the same as a tree-fold integer programming. We need to tune the ILP a bit. The tuning is essentially by replacing some of the variables with the equation in $(I)$ it satisfies, i.e., we will remove some of the variables. See Appendix~\ref{ap-sec:tuning} for details. Once transformed into a tree-fold integer programming, Theorem~\ref{thm:main-ILP} can be applied and Theorem~\ref{thm:main-fpt} is proved.

\section{Conclusion}
We consider the subtree cover problem in this paper and provide an FPT algorithm parameterized by the makespan. 
Our FPT algorithm follows from a more general FPT result on the tree-fold integer programming, which extends the existing FPT algorithm on the $n$-fold integer programming. The running times of the FPT algorithms is huge and is only of theoretical interest. Another important open problem is whether we can derive FPT algorithm for integer programming with the matrix $A$ that has an even more general structure. It is also interesting to consider approximation schemes for the subtree cover problem.

\appendix

\section{Proofs Omitted in Section~\ref{sec:fpt}}

\subsection{Proof of Theorem~\ref{thm:hardness}}
\begin{proof}[Proof of Theorem~\ref{thm:hardness}]
	We reduce from $3$-partition. In the $3$-partition problem, given is a set of $3n$ integers $a_1,a_2,\cdots,a_{3n}$ with $B/4<a_j<B/2$, $\sum_j a_j=3nB$ where $B=n^{O(1)}$. The goal is to determine whether we can partition the $3n$ integers of $n$ subsets $D_1,D_2,\cdots,D_n$, each of size $3$, such that $\sum_{a_j\in D_i} a_j=B$ for every $1\le i\le n$.
	
	We construct a subtree cover instance as follows. There is a root $r$. The root has $3n$ children $v_1,v_2,\cdots,v_{3n}$. Each $v_j$ further has $a_j$ children. We let the weight of every edge be $1$. 	
	
	We show that the constructed subtree cover instance can be covered by $n$ subtrees of makespan $B+3$ if and only if the given 3-partition instance admits a feasible partition. 
	
	Suppose the 3-partition instance admits a feasible partition, then each subtree consists of the root, $\{v_j|a_j\in S_i\}$ and their children. It is easy to verify that the weight of each subtree is exactly $B+3$. 
	
	Suppose the subtree cover instance admits a solution of makespan $B+3$. Since all edge weights sum up to $nB+3n$, we know each subtree consists of exactly $B+3$ edges, and each edge appears in one subtree. Therefore, if a subtree contains a vertex $v_j$, it must contain all the children of $v_j$. As $v_j$ has $B/4<a_j<B/2$ children, it is easy to see that each subtree contains exactly 3 children of the root, implying readily a solution for the 3-partition instance.
\end{proof}

\subsection{Preliminaries for Tree-fold Integer Programming}\label{ap-sec:pre-tree}
We provide a brief introduction to the notions needed for solving a general integer programming. We refer the readers to a nice book~\cite{de2013algebraic} for details. 

We define {\em Graver basis}, which was introduced in~\cite{graver1975foundations} by Graver and is crucial for our algorithm. 

We define a partial order $\sqsubseteq$ in $\mathbb{R}^n$ in the following way:
$$\textrm{For any } \vex,\vey\in \mathbb{R}^n,\,\, \vex\sqsubseteq \vey\text{ if and only if for every } 1\le i\le n, |x_i|\le |y_i| \text{ and } x_i\cdot y_i\ge 0. $$ 
Roughly speaking, $\vex\sqsubseteq \vey$ implies that $\vex$ and $\vey$ lie in the same orthant, and $\vex$ is \lq\lq closer\rq\rq\, to the origin $0$ than $\vey$. The partial order $\sqsubseteq$, when restricted to $\mathbb{R}^n_+$, coincides with the classical coordinate-wise partial order $\le$. 

Given any subset $X\subseteq \mathbb{R}^n$, we say $\vex$ is an $\sqsubseteq$-minimal element of $X$ if $\vex\in X$ and there does not exist $\vey\in X$, $\vey\neq \vex$ such that $\vey\sqsubseteq \vex$. 

According to Gordan's Lemma, for any subset $Z\subseteq \mathbb{Z}^n$, the number of $\sqsubseteq$-minimal elements in $Z$ is finite. Indeed, this fact is known as Dickson's Lemma for the coordinate-wise partial order $\preceq$. 


\begin{definition}
	The Graver basis of an integer $m\times n$ matrix $A$ is the
	finite set $\mathcal{G}(A)\subseteq \mathbb{Z}^n$ which consists of all the $\sqsubseteq$-minimal elements of $ker_{\mathbb{Z}^n}(A)=\{\vex\in \mathbb{Z}^n| A\vex=0,\vex\neq 0\}$.
\end{definition}

The Graver basis $\mathcal{G}(A)$ is only dependent on $A$. Let $||B||_{\infty}$ be the largest absolute value over all entries. For any $\veg\in\mathcal{G}(A)$, we have the following rough estimation for some constant $c_1,c_2$~\cite{onn2010nonlinear}:
$$|\mathcal{G}(A)|\le (c_1||A||_{\infty})^{mn}\quad\text{ and } ||g||_{\infty}\le (c_2||A||_{\infty})^{mn}.$$

The Graver basis has the following {\em positive sum property}: for every $\vez\in ker_{\mathbb{Z}^n}(A)$, there exist a subset $U\subseteq \mathcal{G}(A) $ such that for every $\veg_i\in U$, $\veg_i\sqsubseteq \vez$, and furthermore, $\vez=\sum_{\veg_i\in U}\alpha_i\veg_i$ for some $\alpha_i\in \mathbb{Z}_+$. See~\cite{onn2010nonlinear,de2013algebraic} for details.

Given is an integer programming of the following form:
\begin{eqnarray}\label{eq:IP}
\min\{\vecc^T\vex| A\vex=\veb, \vel\le\vex\le\veu, \vex\in\mathbb{Z}^n\}.
\end{eqnarray}
Let $\vex$ be an arbitrary feasible solution of~(\ref{eq:IP}). We say $\veq$ is an {\em augmentation vector} for $\vex$ if $\vex+\veq$ is a feasible solution of~(\ref{eq:IP}) that has an objective value strictly better than $\vex$, i.e., $\vecc^T(\vex+\veq)<\vecc^T\vex$. Therefore, $A\veq=0$ and $\vecc^T\veq<0$.

It is shown by Graver~\cite{graver1975foundations} that $\vex^*$ is an optimal solution of~(\ref{eq:IP}) if and only if there does not exist $\veg\in \mathcal{G}(A)$ which is an augmentation vector for $\vex^*$. Later on, Hemmecke, Onn and Weismantel~\cite{hemmecke2011polynomial} proved that, starting from an arbitrary feasible solution $x_0$ for~(\ref{eq:IP}), the optimal solution $\vex^*$ could be achieved by iteratively applying the {\em best augmentation via Graver basis}, i.e., augmenting $\vex$ by using the best possible augmentation vector of the form $\gamma\veg$, where $\gamma\in\mathbb{Z}_+$ and $\veg\in \mathcal{G}(A)$. The total number of augmentation vectors needed is bounded by $O(nL)$, where $L$ is the length of the binary encoding of the vector $(\vecc,\veb,\vel,\veu)$ (There may exist an augmentation vector which is better than any Graver basis, however, the result of~\cite{hemmecke2011polynomial} allows us to restrict our attention to Graver basis). This statement remains true if, instead of choosing the best possible augmentation vector of the form $\gamma\veg$, say, $\gamma^*\veg^*$, we choose an augmentation vector $\veq$ which is at least as good as $\gamma^*\veg^*$. That is, if in each augmentation vector we choose an augmentation vector $\veq$ such that $\vecc^T\veq\le \gamma^*\vecc^T\veg^*$, the optimal solution $\vex^*$ could also be achieved after $O(nL)$ augmentation vectors~\cite{hemmecke2013n,de2013algebraic}. Notice that $\veq$ does not necessarily belong to $\G(A)$. Such greedy algorithm is called {\em Graver-best augmentation algorithm}.  

The results by Hemmecke et al.~\cite{hemmecke2013n,de2013algebraic} imply that, to design a polynomial time algorithm for~(\ref{eq:IP}), it suffices to handle the following two problems:
\begin{itemize}
	\item[a.] finding a feasible initial solution for~(\ref{eq:IP}) in polynomial time;
	\item[b.] finding a Graver-best augmentation algorithm that runs in polynomial time.
\end{itemize}
In Subsection~\ref{ap-sec:initial-solution} we show in detail how to find a feasible initial solution for~(\ref{eq:IP}) in polynomial time. Roughly speaking this could be handled by establishing another ILP with a trivial initial feasible solution and finding its optimal solution.

We focus on problem [b]. A natural algorithm is that, given the current feasible solution $\vex$, for every $\veg\in\mathcal{G}(A)$, we find integer $\gamma_{\veg}\in\mathbb{Z}_+$ such that $\vex+\gamma_{\veg}\veg$ is still feasible and $\vecc^T(\vex+\gamma_{\veg}\veg)$ is minimized, and among all the $\gamma_{\veg}\veg$ we pick the best one. For any fixed $\veg$ we can easily find $\gamma_{\veg}$ by solving an integer programming with only one integral variable $\gamma_{\veg}$. Therefore the overall running time depends on the cardinality of the Graver bais $\mathcal{G}(A)$. Unfortunately $|\mathcal{G}(A)|$ could be huge in general. However, if the matrix $A$ has some special structure, then $|\mathcal{G}(A)|$ could be significantly smaller. 

From now on we focus on a tree-fold matrix $A$ consisting of $n$ copies of submatrices $A_1$, $A_2$, $\cdots$, $A_{\tau}$ and write it as $A=T[A_1,A_2,\cdots,A_{\tau}]$ for simplicity. Recall that each $A_i$ is an $s_i\times t$-matrix, whereas we are restricting to the following
\begin{eqnarray*}
\min\{\vecc^T\vex| A\vex=\veb, \vel\le\vex\le\veu, \vex\in\mathbb{Z}^{nt}\}.
\end{eqnarray*} 
Notice that if $\tau=2$, $A$ is called an $n$-fold matrix. 
In 2013, Hemmecke et al. provided a Graver-best augmentation algorithm for $n$-fold integer programming that runs in $O(n^3L)$ time (here the big-$O$ hides all coefficients that only depend on $A_1$ and $A_2$). The following lemma is the key ingredient to their algorithm. It strengthens the {\em fitness theorem} in~\cite{hocsten2007finiteness}.

Consider any $\vex\in \mathbb{Z}^{nt}$. We write $\vex$ as a tuple $\vex=(\vex^1,\vex^2,\cdots,\vex^n)$ where $\vex^i\in\mathbb{Z}^t$. Each $\vex^i$ is called a {\em brick} of $\vex$.
\begin{lemma}[\cite{hemmecke2013n}]
	Let $A=T[A_1,A_2]$. There exists some integer $\lambda=\lambda(A_1,A_2)$ that only depends on matrices $A_1$ and $A_2$, and
	$$H(A)=\{\veh\in \mathbb{Z}^t| \veh \text{ is the sum of at most $\lambda$ elements of } \mathcal{G}(A_2)\},$$
	such that for any $\veg=(\veg^1,\veg^2,\cdots,\veg^n)\in\mathcal{G}(A)$
	we have $\sum_{i\in I}\veg^i\in H(A)$ for any $I\subseteq \{1,2,\cdots,n\}$.
\end{lemma}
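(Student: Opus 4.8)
The plan is to reduce the claim to a statement about a single, $n$-independent matrix built from $A_1$ and $\G(A_2)$, and then to invoke finiteness of minimal nonnegative kernel elements. First I would record that, since the diagonal blocks of $A$ force $A_2\veg^i=0$ for every brick, each $\veg^i$ lies in $\ker_{\mathbb Z}(A_2)$. By the positive-sum property of the Graver basis I can fix, for each $i$, a decomposition $\veg^i=\sum_j m_{i,j}\veh_j$ into elements $\veh_j\in\G(A_2)$ with every used $\veh_j\sqsubseteq\veg^i$, so the summands inside a single brick all share the orthant of $\veg^i$ and never cancel. Writing $G_2$ for the $t\times p$ matrix whose columns are the elements of $\G(A_2)$, and $\vew\in\mathbb Z_{\ge0}^p$ for the vector of total multiplicities $w_j=\sum_i m_{i,j}$, the top block of $A\veg=0$ reads $A_1\sum_i\veg^i=A_1G_2\vew=0$, so $\vew\in\ker_{\mathbb Z}(A_1G_2)\cap\mathbb Z_{\ge0}^p$. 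Since $\sum_{i\in I}\veg^i=\sum_j\bigl(\sum_{i\in I}m_{i,j}\bigr)\veh_j$ is a sum of at most $\sum_j w_j=\|\vew\|_1$ elements of $\G(A_2)$ for every $I$, the whole lemma follows once I bound $\|\vew\|_1$ by a constant depending only on $A_1,A_2$.

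The heart of the argument is to show that $\vew$ must be $\sqsubseteq$-minimal in $(\ker_{\mathbb Z}(A_1G_2)\cap\mathbb Z_{\ge0}^p)\setminus\{0\}$. Suppose not; then there is $\veu$ with $\veu\neq 0$, $\veu\le\vew$ coordinatewise, $\veu\neq\vew$, and $A_1G_2\veu=0$. Because $u_j\le w_j=\sum_i m_{i,j}$ in every coordinate, I can split the multiplicities brickwise, choosing integers $0\le m'_{i,j}\le m_{i,j}$ with $\sum_i m'_{i,j}=u_j$, and set $\veg'^i=\sum_j m'_{i,j}\veh_j$ and $\veg''^i=\veg^i-\veg'^i$. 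Each $\veg'^i$ is a nonnegative combination of vectors lying in the orthant of $\veg^i$, so $\veg'^i,\veg''^i\sqsubseteq\veg^i$ with no cancellation; the block structure then gives $A\veg'=0$, since the diagonal blocks vanish termwise ($A_2\veh_j=0$) and the top block evaluates to $A_1G_2\veu=0$. Thus $\veg=\veg'+\veg''$ with $\veg',\veg''\in\ker_{\mathbb Z}(A)\setminus\{0\}$ and $\veg'\sqsubseteq\veg$, $\veg'\neq\veg$, contradicting $\veg\in\G(A)$.

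With minimality of $\vew$ established, Gordan's (Dickson's) lemma guarantees that the set of $\sqsubseteq$-minimal elements of $\ker_{\mathbb Z}(A_1G_2)\cap\mathbb Z_{\ge0}^p$ is finite, and it depends only on $A_1G_2$, hence only on $A_1$ and $A_2$. Taking $\lambda=\lambda(A_1,A_2)$ to be the maximum $\|\cdot\|_1$ over this finite set bounds $\|\vew\|_1\le\lambda$ and completes the proof. I expect the lifting step of the second paragraph to be the main obstacle: the delicate point is that a decomposition of the aggregate multiplicity vector $\vew$ must be pulled back to a decomposition of the high-dimensional $\veg$ in a way that respects both the partition into bricks and the orthant of each brick, so that the two resulting pieces are genuinely $\sqsubseteq$-below $\veg$. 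The nonnegativity of the multiplicities, together with the orthant-consistency of the positive-sum decomposition, is exactly what makes this pullback valid; the remaining steps---passing to bricks, forming $\vew$, and invoking finiteness---are routine. I note finally that this $\tau=2$ argument is the base case from which the general tree-fold Lemma~\ref{le:fitness} should be obtained by induction on the depth of the tree representing $A$.
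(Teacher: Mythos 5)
Your proposal is correct and follows essentially the same route as the paper: specialized to $\tau=2$, the paper's proof of Lemma~\ref{le:fitness} performs exactly your steps --- a brickwise positive-sum decomposition over $\mathcal{G}(A_2)$, aggregation of the multiplicities into a nonnegative vector in $\ker_{\mathbb{Z}}(A_1\veG(A_2))$, and a proof of its minimality (the paper's Claim that $\veQ^i(A_2')\in\G(A_1')$) by the same brickwise splitting of multiplicities to produce $0\neq\bar{\veg}\sqsubset\veg$ with $A\bar{\veg}=0$, contradicting $\veg\in\G(A)$. Your closing remark is also accurate: the paper obtains the general tree-fold lemma by iterating precisely this argument level by level up the tree, with $\lambda$ bounded by products of norms of the Graver bases of the matrices $A_k'$, which depend only on $A_1,\cdots,A_\tau$.
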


We further generalize the algorithm of Hemmecke et al.~\cite{hemmecke2013n} to tree-fold integer programming. Towards this, we first give a generalization of the above lemma, and then we show how to further generalize their algorithm.

\subsection{Proof of Lemma~\ref{le:fitness}}\label{ap-sec:fitness}
\begin{proof}[Proof of Lemma~\ref{le:fitness}]
	Throughout this proof, for an arbitrary matrix $B$, we list its Graver bases (in an arbitrary order) as $\veg^1(B),\veg^2(B),\cdots,\veg^{|\G(B)|}(B)$, and let $\veG(B)=(\veg^1(B),\veg^2(B),\cdots,\veg^{|\G(B)|}(B))$ be the matrix with each of the bases being its column.
	
	Consider $A_{\tau}$. For any $\veg=(\veg^1,\veg^2,\cdots,\veg^n)\in\mathcal{G}(A)$, it follows directly that $A_{\tau}\veg^i=0$ for every $1\le i\le n$. According to the positive sum property of the Graver basis, there exist $q_j^i(A_{\tau})\in \mathbb{Z}_{\ge 0}$ such that 
	\begin{eqnarray}\label{eq:veg}
	\veg^i=\sum_{j=1}^{|\G(A_{\tau})|}q^i_j(A_{\tau})\veg^j(A_{\tau})=\veG(A_{\tau})\veq^i(A_{\tau}),\quad \forall 1\le i\le d_{\tau}=n
	\end{eqnarray}
	where $\veq^i(A_{\tau})=(q^i_1(A_{\tau}),\cdots,q^i_{|\G(A_{\tau})|})^T$.
	Notice that $\veg(A_{\tau})$ is only dependent on matrix $A_{\tau}$. In order to show that $\sum_{i\in I}\veg^i\in H(A)$ for some $\lambda$, it suffices to show that $\sum_i||\veq^i(A_{\tau})||_1=\sum_{i,j}|q^i_j(A_{\tau})|$ is upper bounded by some value that only depends on $A_1,A_2,\cdots,A_{\tau}$.
	
	\noindent\textbf{Step 1.} We consider $A_{\tau-1}$. According to $A\veg=0$, we have 
	$$\sum_{i\in S_{\tau-1}^\ell} A_{\tau-1}\veg^i=0, \quad \forall 1\le \ell\le d_{\tau-1}$$
	Plugging in Equation~\ref{eq:veg}, we have 
	\begin{eqnarray}\label{eq:lemma2-eq2}
	\sum_{i\in S_{\tau-1}^\ell} A_{\tau-1}\veG(A_{\tau})\veq^i(A_{\tau})=0. \quad \forall 1\le \ell\le d_{\tau-1}
	\end{eqnarray}
	We rewrite the above equation in the following way. Let matrix $A_{\tau-1}'=A_{\tau-1}\veG(A_{\tau})$, $\veQ^{\ell}(A_{\tau})=\sum_{i\in S_{\tau-1}^\ell}\veq^i(A_{\tau})$, we have 
	\begin{eqnarray}\label{eq:lemma2-eq3}
	\sum_{i\in S_{\tau-1}^\ell}\veg^i=\veG(A_{\tau})\veQ^\ell(A_{\tau}), \quad\forall 1\le \ell\le d_{\tau-1}
	\end{eqnarray}
	\begin{eqnarray}\label{eq:lemma2-eq4}
	A_{\tau-1}'\veQ^{\ell}(A_{\tau})=0, \quad\forall 1\le \ell\le d_{\tau-1}
	\end{eqnarray}
	Therefore, $\veQ^\ell(A_{\tau})\in ker_{\mathbb{Z}^{|\G(A_{\tau})|}}({A_{\tau-1}'})$. We replace the index $\ell$ by $i$. According to the positive sum property, we list the Graver basis of $A_{\tau-1}'$ as $\veg^1(A_{\tau-1}')$, $\cdots$, $\veg^{|\G(A_{\tau-1}')|}(A_{\tau-1}')$, then there exist $q_j^{i}(A_{\tau-1}')\in \mathbb{Z}_{\ge 0}$ such that
	\begin{eqnarray}\label{eq:lemma2-eq5}
	\veQ^{i}(A_{\tau})=\sum_{j=1}^{|\G(A_{\tau-1}')|} q_j^i(A_{\tau-1}')\veg^j(A_{\tau-1}')=\veG(A_{\tau-1}')\veq^i(A_{\tau-1}'), \quad\forall 1\le i\le d_{\tau-1}
	\end{eqnarray}
	where $\veq^i(A_{\tau-1}')=(q^i_1(A_{\tau-1}'),\cdots,q^i_{|\G(A_{\tau-1}')|})^T$. Furthermore, as every entry of $\veQ^i(A_{\tau})$ is non-negative, the positive sum property ensures that $q_j^{i}(A_{\tau-1}')>0$ only if every entry of $\veg^j(A_{\tau-1}')$ is non-negative.
	
	\noindent\textbf{Step 2.} We consider $A_{\tau-2}$. According to $A\veg=0$, we have
	$$\sum_{i_1\in S_{\tau-2}^\ell}\sum_{i_0\in S_{\tau-1}^{i_1}} A_{\tau-2}\veg^{i_0}=0, \quad \forall 1\le \ell\le d_{\tau-2}.$$
	
	Plugging in Equation~\ref{eq:lemma2-eq3} and~\ref{eq:lemma2-eq5}, we have
	$$\sum_{i_1\in S_{\tau-2}^\ell}\sum_{i_0\in S_{\tau-1}^{i_1}} A_{\tau-2}\veg^{i_0}=\sum_{i_1\in S_{\tau-2}^\ell}A_{\tau-2}\veG(A_{\tau})\veQ^{i_1}(A_{\tau})=\sum_{i_1\in S_{\tau-2}^\ell}A_{\tau-2}\veG(A_{\tau})\veG(A_{\tau-1}')\veq^{i_1}(A_{\tau-1}')=0,\quad\forall 1\le \ell\le d_{\tau-2}$$
	
	Let $A_{\tau-2}'=A_{\tau-2}\veG(A_{\tau})\veG(A_{\tau-1}')$, $\veQ^{\ell}(A_{\tau-1}')=\sum_{i_1\in S_{\tau-2}^\ell}\veq^{i_1}(A_{\tau-1}')$, we have
	\begin{eqnarray}\label{eq:lemma2-eq6}
	\sum_{i_1\in S_{\tau-2}^\ell}\sum_{i_0\in S_{\tau-1}^{i_1}}\veg^{i_0}=\veG(A_{\tau})\veG(A_{\tau-1}')\veQ^\ell(A_{\tau-1}'), \quad\forall 1\le \ell\le d_{\tau-2}
	\end{eqnarray}
	\begin{eqnarray}\label{eq:lemma2-eq7}
	A_{\tau-2}'\veQ^{\ell}(A_{\tau-1}')=0, \quad\forall 1\le \ell\le d_{\tau-2}
	\end{eqnarray}
	Therefore, $\veQ^{\ell}(A_{\tau-1}')\in ker_{\mathbb{Z}^{|\G(A_{\tau-1})'|}}(A_{\tau-2}')$. Replacing the index $\ell$ by $i$, there exist $q_j^i(A_{\tau-2}')\in\mathbb{Z}_{\ge0}$ such that 
	\begin{eqnarray}\label{eq:lemma2-eq8}
	\veQ^{i}(A_{\tau-1}')=\sum_{j=1}^{|\G(A_{\tau-2}')|} q_j^i(A_{\tau-2}')\veg^j(A_{\tau-2}')=\veG(A_{\tau-2}')\veq^i(A_{\tau-2}'), \quad\forall 1\le i\le d_{\tau-2}
	\end{eqnarray}
	where $\veq^i(A_{\tau-2}')=(q^i_1(A_{\tau-2}'),\cdots,q^i_{|\G(A_{\tau-2}')|})^T$.
	
	We can iteratively carry on the above argument. 
	
	\noindent\textbf{Step $\tau-k$.} In general, suppose we have shown the following three equations:
	\begin{eqnarray}\label{eq:lemma2-eq9}
	\sum_{i_{\tau-k-2}\in S_{k+1}^\ell}\sum_{i_{\tau-k-3}\in S_{k+2}^{i_{\tau-k-2}}}\cdots\sum_{i_{0}\in S_{\tau-1}^{i_{1}}}\veg^{i_{0}}=\veG(A_{\tau})\veG(A_{\tau-1}')\cdots\veG(A_{k+2}')\veQ^\ell(A_{k+2}'), \quad\forall 1\le \ell\le d_{k+1}
	\end{eqnarray}
	\begin{eqnarray}\label{eq:lemma2-eq10}
	A_{k+1}'\veQ^{\ell}(A_{k+2}')=0, \quad\forall 1\le \ell\le d_{k+1}
	\end{eqnarray}
	Replacing the index $\ell$ by $i$, there exist $q_j^i(A_{k+1}')\in \mathbb{Z}_{\ge 0}$ such that
	\begin{eqnarray}\label{eq:lemma2-eq11}
	\sum_{i'\in S_{k+1}^i}\veq^{i'}(A_{k+2}')=\veQ^{i}(A_{k+2}')=\sum_{j=1}^{|\G(A_{k+1}')|} q_j^i(A_{k+1}')\veg^j(A_{k+1}')=\veG(A_{k+1}')\veq^i(A_{k+1}'), \quad\forall 1\le i\le d_{k+1}
	\end{eqnarray}
	where $\veq^i(A_{k+1}')=(q^i_1(A_{k+1}'),\cdots,q^i_{|\G(A_{k+1}')|})^T$,
	and $A_{k+1}'=A_{k+1}\veG(A_{\tau})\veG(A_{\tau-1}')\cdots\veG(A_{k+2}')$.
	
	When we consider $A_k$, $A\veg=0$ implies that
	\begin{eqnarray}\label{eq:lemma2-eq12}
	\sum_{i_{\tau-k-1}\in S_{k}^\ell}\sum_{i_{\tau-k-2}\in S_{k+1}^{\tau-k-1}}\sum_{i_{\tau-k-3}\in S_{k+2}^{i_{\tau-k-2}}}\cdots\sum_{i_{0}\in S_{\tau-1}^{i_{1}}} A_{k}\veg^{i_0}=0, \quad \forall 1\le \ell\le d_{k}.
	\end{eqnarray}
	Indeed, if we view each $\veg^i$ as the $i$-th leaf (from left to right) of the tree, the summation is taken over all the leaves of the sub-tree routed at the vertex corresponding to $S_k^{\ell}$.
	Plugging Equation~\ref{eq:lemma2-eq9} and Equation~\ref{eq:lemma2-eq11} into Equation~\ref{eq:lemma2-eq12}, and replacing index $i_{\tau-k-1}$ by $i$, we have
	\begin{eqnarray}\label{eq:lemma2-eq13}
	\sum_{i\in S_{k}^\ell}A_{k} \veG(A_{\tau})\veG(A_{\tau-1}')\cdots\veG(A_{k+2}')\veG(A_{k+1}')\veq^{i}(A_{k+1}')=0, \quad \forall 1\le \ell\le d_{k}
	\end{eqnarray}
	Let $A_{k}'=A_{k} \veG(A_{\tau})\veG(A_{\tau-1}')\cdots\veG(A_{k+1}')$, $\veQ^{\ell}(A_{k+1}')=\sum_{i'\in S_{k}^\ell}\veq^{i'}(A_{k+1}')$, we have
	\begin{eqnarray}\label{eq:lemma2-eq14}
	\sum_{i_{\tau-k-1}\in S_{k}^\ell}\sum_{i_{\tau-k-2}\in S_{k+1}^{\tau-k-1}}\cdots\sum_{i_{0}\in S_{\tau-1}^{i_{1}}} \veg^{i_0}=\veG(A_{\tau})\veG(A_{\tau-1}')\cdots\veG(A_{k+1}')\veQ^\ell(A_{k+1}'), \quad\forall 1\le \ell\le d_{k}
	\end{eqnarray}
	\begin{eqnarray}\label{eq:lemma2-eq15}
	A_{k}'\veQ^{\ell}(A_{k+1}')=0, \quad\forall 1\le \ell\le d_{k}
	\end{eqnarray}
	Therefore, $\veQ^{\ell}(A_{k+1}')\in ker_{\mathbb{Z}^{|\G(A_{k+1})'|}}(A_{k}')$. Replacing the index $\ell$ by $i$, there exist $q_j^i(A_{k}')\in\mathbb{Z}_{\ge0}$ (by the positive sum property) such that 
	\begin{eqnarray}\label{eq:lemma2-eq16}
	\sum_{i'\in S_{k}^i}\veq^{i'}(A_{k+1}')=\veQ^{i}(A_{k+1}')=\sum_{j=1}^{|\G(A_{k}')|} q_j^i(A_{k}')\veg^j(A_{k}')=\veG(A_{k}')\veq^i(A_{k}'), \quad\forall 1\le i\le d_{k}
	\end{eqnarray}
	where $\veq^i(A_{k}')=(q^i_1(A_{k}'),\cdots,q^i_{|\G(A_{k}')|})^T$,
	and $A_{k}'=A_{k}\veG(A_{\tau})\veG(A_{\tau-1}')\cdots\veG(A_{k+1}')$. 
	
	Specifically, we let $A_{\tau}'=A_{\tau}$, therefore the above equalities hold for any $1\le k\le \tau-1$.
	
	\noindent\textbf{Step $\tau-1$.}
	Eventually we consider $A_1$ and derive the following based on the iterative argument.
	\begin{eqnarray}\label{eq:lemma2-eq17}
	\sum_{i_{\tau-2}\in S_{1}^\ell}\sum_{i_{\tau-3}\in S_{2}^{i_{\tau-2}}}\cdots\sum_{i_{0}\in S_{\tau-1}^{i_{1}}}\veg^{i_{0}}=\veG(A_{\tau})\veG(A_{\tau-1}')\cdots\veG(A_{2}')\veQ^\ell(A_{2}'), \quad\forall 1\le \ell\le d_{1}=1
	\end{eqnarray}
	\begin{eqnarray}\label{eq:lemma2-eq18}
	A_{1}'\veQ^{\ell}(A_{2}')=0, \quad\forall 1\le \ell\le d_{1}=1
	\end{eqnarray}
	Replacing the index $\ell$ by $i$, there exist $q_j^i(A_{1}')\in\mathbb{Z}_{\ge 0}$ such that
	\begin{eqnarray}\label{eq:lemma2-eq19}
	\sum_{i'\in S_{1}^i}\veq^{i'}(A_{2}')=\veQ^{i}(A_{2}')=\sum_{j=1}^{|\G(A_{1}')|} q_j^i(A_{1}')\veg^j(A_{1}')=\veG(A_{1}')\veq^i(A_{1}'), \quad\forall 1\le i\le d_{1}=1
	\end{eqnarray}
	where $\veq^i(A_{1}')=(q^i_1(A_{1}'),\cdots,q^i_{|\G(A_{1}')|})^T$,
	and $A_{1}'=A_{1}\veG(A_{\tau})\veG(A_{\tau-1}')\cdots\veG(A_{2}')$.
	
	We make the following claim.
	\begin{claim}
		$\veQ^{i}(A_2')\in \G(A_1')$.
	\end{claim}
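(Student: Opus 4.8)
The plan is to argue by contradiction from the $\sqsubseteq$-minimality of $\veg\in\G(A)$. Since $d_1=1$, only $\veQ^1(A_2')$ is at stake, and by Equation~(\ref{eq:lemma2-eq19}) it is a non-negative vector lying in $\ker(A_1')$. First I would record that it is nonzero: were $\veQ^1(A_2')=0$, then, being a sum of the non-negative vectors $\veq^{i'}(A_2')$, all of these would vanish, and pushing this observation down the tree (each $\veQ^\ell$ is a sum of non-negative $\veq$'s, and each $\veq$ maps to the next $\veQ$ through a $\veG$ matrix) would force every leaf vector $\veq^i(A_\tau)=0$, hence $\veg=0$, contradicting $\veg\in\G(A)$. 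Assuming $\veQ^1(A_2')\notin\G(A_1')$, it fails to be $\sqsubseteq$-minimal, so (being non-negative) there is a single basis element $\veg^{j_0}(A_1')\in\G(A_1')$ with $0\le\veg^{j_0}(A_1')\le\veQ^1(A_2')$ and $\veg^{j_0}(A_1')\neq\veQ^1(A_2')$. I set $\veQ_1=\veg^{j_0}(A_1')$ and $\veQ_2=\veQ^1(A_2')-\veg^{j_0}(A_1')$; both are nonzero, non-negative, and lie in $\ker(A_1')$.

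The core of the argument is to lift this top-level splitting into a decomposition $\veg=\veg_1+\veg_2$ of the whole vector by a coordinate-wise flow decomposition propagated down the tree. At the root I distribute $\veQ_1$ (resp.\ $\veQ_2$) among the level-$2$ children, choosing non-negative $\veq_1^{i'},\veq_2^{i'}$ with $\veq_1^{i'}+\veq_2^{i'}=\veq^{i'}(A_2')$ and $\sum_{i'\in S_1^1}\veq_1^{i'}=\veQ_1$; this is feasible coordinate by coordinate because $0\le\veQ_1\le\sum_{i'}\veq^{i'}(A_2')$. Inductively, at a node $\ell$ of level $k$ already carrying a non-negative split $\veq^\ell(A_k')=\veq_1^\ell+\veq_2^\ell$, I put $\veQ_1^\ell=\veG(A_k')\veq_1^\ell$ and $\veQ_2^\ell=\veG(A_k')\veq_2^\ell$. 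The key point is that for $k<\tau$ every column $\veg^j(A_k')$ occurring in $\veq^\ell(A_k')$ is non-negative --- exactly the positive-sum observation made in Step~1 of the proof, valid at every level since each $\veQ^\ell(A_{k+1}')$ is non-negative --- so $\veQ_1^\ell,\veQ_2^\ell$ are non-negative and bounded above by $\veQ^\ell(A_{k+1}')=\sum_{i'\in S_k^\ell}\veq^{i'}(A_{k+1}')$, which lets me again distribute them among the children of $\ell$. Carrying this down to the leaves yields non-negative $\veq_1^i,\veq_2^i$ with $\veq_1^i+\veq_2^i=\veq^i(A_\tau)$, and I define the bricks $\veg_1^i=\veG(A_\tau)\veq_1^i$ and $\veg_2^i=\veG(A_\tau)\veq_2^i$.

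It then remains to verify that $\veg_1,\veg_2$ furnish a non-trivial $\sqsubseteq$-decomposition of $\veg$. Clearly $\veg_1+\veg_2=\veg$. For $A\veg_1=0$, the $A_\tau$-rows vanish since $A_\tau\veG(A_\tau)=0$, while for an $A_k$-row at a node $\ell$ with $k<\tau$ a downward induction gives $\sum_{i_0\in\mathrm{leaves}(\ell)}\veg_1^{i_0}=\veG(A_\tau)\cdots\veG(A_{k+1}')\veQ_1^\ell$, so multiplying by $A_k$ produces $A_k'\veQ_1^\ell$; this is $0$ because $\veQ_1^\ell=\veG(A_k')\veq_1^\ell$ and $A_k'\veG(A_k')=0$ (the columns of $\veG(A_k')$ are Graver elements of $A_k'$), the root being handled directly by $\veQ_1\in\ker(A_1')$. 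For the orthant condition, the positive sum property applied to $\veg^i=\veG(A_\tau)\veq^i(A_\tau)$ shows every column $\veg^j(A_\tau)$ it uses satisfies $\veg^j(A_\tau)\sqsubseteq\veg^i$; since $0\le\veq_1^i\le\veq^i(A_\tau)$, the brick $\veg_1^i$ is a non-negative combination of vectors all lying in the orthant of $\veg^i$ and summing, together with $\veg_2^i$, to $\veg^i$, whence $\veg_1,\veg_2\sqsubseteq\veg$. Both are nonzero: tracing $\veQ_1\neq0$ (resp.\ $\veQ_2\neq0$) downward, non-negativity keeps some $\veq_1^{i'}$ nonzero at each level, and a non-negative same-orthant combination of nonzero Graver vectors cannot cancel to $0$. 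Thus $\veg=\veg_1+\veg_2$ splits $\veg$ into two nonzero elements of $\ker(A)$ with $\veg_1,\veg_2\sqsubseteq\veg$, contradicting the $\sqsubseteq$-minimality of $\veg$, and therefore $\veQ^i(A_2')\in\G(A_1')$.

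The delicate part will be the propagation step, where one must maintain simultaneously that every partial split is non-negative --- so that the coordinate-wise distribution at the next level stays feasible --- and that it has the form $\veG(A_k')\veq_1^\ell$ --- so that the reconstructed partial sums remain in $\ker(A_k')$. Both properties rest on the non-negativity of the intermediate Graver elements $\veg^j(A_k')$ for $k<\tau$, a feature special to this hierarchical construction, and it is precisely this that allows the $\tau=2$ reasoning of Hemmecke et al.\ to be carried out for trees of arbitrary depth.
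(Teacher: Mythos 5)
Your proposal is correct and takes essentially the same route as the paper's proof: you contradict the $\sqsubseteq$-minimality of $\veg\in\G(A)$ by taking a proper nonzero kernel vector below $\veQ^{1}(A_2')$, distributing it coordinate-wise down the tree (feasible because the intermediate $\veq^{i}(A_k')$ are non-negative and, by the positive sum property, only non-negative columns of $\veG(A_k')$ are used), invoking the same-orthant argument at the leaf level to get $\sqsubseteq$-comparability with the bricks $\veg^i$, and verifying membership in $\ker(A)$ by the identical telescoping computation reducing each row sum to $A_k'\veQ_1^{\ell}=0$. The only cosmetic differences are that you split $\veg$ into two parts $\veg_1+\veg_2$ (and take the dominated vector to be an actual Graver element of $A_1'$) where the paper constructs a single $0\neq\bar{\veg}\sqsubset\veg$, and that you spell out the nonzero-ness bookkeeping (including why $\veQ^{1}(A_2')\neq 0$) that the paper tracks more tersely via its indices $i_1,i_2$.
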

	\begin{proof}[Proof of the Claim] Suppose on the contrary that $\veQ^{i}(A_2')\not\in \G(A_1')$, then there exist $0\neq \bar{\veQ}^{i}(A_2')\sqsubset \veQ^{i}(A_2')$ such that $A_{1}'\bar{\veQ}^{i}(A_{2}')=0$. In the following we will construct $0\neq \bar{\veg}\sqsubset \veg$ such that $A\bar{\veg}=0$, which contradicts the fact that $\veg\in \G(A)$. Hence, the claim is true.
		
		We show how to construct $\bar{\veg}$. According to Equation~\ref{eq:lemma2-eq19}, $\sum_{i'\in S_{1}^i}\veq^{i'}(A_{2}')=\veQ^{i}(A_{2}')$. We know that every entry of $\veq^{i'}(A_{2}')$, and consequently $\veQ^{i}(A_{2}')$, is non-negative. Therefore every entry of $\bar{\veQ}^{i}(A_2')$ is also non-negative. Consider every entry of the equation $\sum_{i'\in S_{1}^i}\veq^{i'}(A_{2}')=\veQ^{i}(A_{2}')$, we have $\sum_{i'\in S_{1}^i}\veq^{i'}_j(A_{2}')=\veQ^{i}_j(A_{2}')$. For $0\le \bar{\veQ}^{i}_j(A_{2}')\le \veQ^{i}_j(A_{2}')$, we can easily find $0\le \bar{\veq}^{i'}_j(A_{2}')\le \veq^{i'}_j(A_{2}')$ such that $\sum_{i'\in S_{1}^i}\bar{\veq}^{i'}_j(A_{2}')=\bar{\veQ}^{i}_j(A_{2}')$. Hence, there exist $\bar{\veq}^{i'}(A_{2}')\sqsubseteq \veq^{i'}(A_{2}')$ such that 
		$$\sum_{i'\in S_{1}^i}\bar{\veq}^{i'}(A_{2}')=\bar{\veQ}^{i}_j(A_{2}'),\quad \forall 1\le i\le d_1=1,$$
		and moreover, there exist some $i_1'$ and $i_2'$ such that $\bar{\veq}^{i'_1}(A_{2}')\sqsubset \veq^{i'_1}(A_{2}')$ and $\bar{\veq}^{i'_2}(A_{2}')\neq 0$.
		
		Replacing $i'$ with $i$, we define 
		$$\bar{\veQ}^{i}(A_{3}')=\veG(A_{2}')\bar{\veq}^i(A_{2}'),\quad 1\le i\le d_2.$$
		It is easy to see that $\bar{\veQ}^{i}(A_{3}')\sqsubseteq {\veQ}^{i}(A_{3}')$ for $1\le i\le d_2$. As each $\bar{\veQ}^{i}(A_{3}')$ is the weighted sum of the Graver basis of $A_2'$, we know $A_2'\bar{\veQ}^{i}(A_{3}')=0$. Furthermore, there exist $1\le i_1,i_2\le d_2$ such that $\bar{\veQ}^{i_1}(A_{3}')\sqsubset {\veQ}^{i_1}(A_{3}')$ and $\bar{\veQ}^{i_2}(A_{3}')\neq 0$.
		
		Carry on the above argument, we can prove iteratively that there exist $\bar{\veq}^{i'}(A_{k+1}')\sqsubseteq \veq^{i'}(A_{k+1}')$ such that 
		$$\sum_{i'\in S_{k}^i}\bar{\veq}^{i'}(A_{k+1}')=\bar{\veQ}^{i}_j(A_{k+1}'),\quad \forall 1\le i\le d_k.$$
		Furthermore, there exist some $i_1'$ and $i_2'$ such that $\bar{\veq}^{i'_1}(A_{k+1}')\sqsubset \veq^{i'_1}(A_{2}')$ and $\bar{\veq}^{i'_2}(A_{k+1}')\neq 0$. 
		
		Replacing the index $i'$ with $i$, we define 
		$$\bar{\veQ}^{i}(A_{k+2}')=\veG(A_{k+1}')\bar{\veq}^i(A_{k+1}'),\quad 1\le i\le d_{k+1}.$$ 
		Then $\bar{\veQ}^{i}(A_{k+2}')\sqsubseteq {\veQ}^{i}(A_{k+2}')$ for $1\le i\le d_{k+1}$. As each $\bar{\veQ}^{i}(A_{k+2}')$ is the weighted sum of the Graver basis of $A_{k+1}'$, we know $A_{k+1}'\bar{\veQ}^{i}(A_{k+2}')=0$. Furthermore, there exist $1\le i_1,i_2\le d_{k+1}$ such that $\bar{\veQ}^{i_1}(A_{k+2}')\sqsubset {\veQ}^{i_1}(A_{k+2}')$ and $\bar{\veQ}^{i_2}(A_{k+2}')\neq 0$.
		
		Eventually, we can show that there exist $\bar{\veQ}^{i}(A_{\tau})=\veG(A_{\tau-1}')\bar{\veq}^i(A_{\tau-1}')$ for $1\le i\le d_{\tau-1}$ such that $\bar{\veQ}^{i}(A_{\tau})\sqsubseteq {\veQ}^{i}(A_{\tau})$, $A_{\tau-1}'\bar{\veQ}^{i}(A_{\tau})=0$. Furthermore, there exist $1\le i_1,i_2\le d_{\tau-1}$ such that $\bar{\veQ}^{i_1}(A_{\tau})\sqsubset {\veQ}^{i_1}(A_{\tau})$ and $\bar{\veQ}^{i_2}(A_{\tau})\neq 0$.
		
		Given that $\veQ^i(A_{\tau})=\sum_{i'\in S_{\tau-1}^i} \veq^{i'}(A_{\tau})$, we can find $\bar{\veq}^{i'}(A_{\tau})\sqsubseteq \veq^{i'}(A_{\tau})$ such that $\bar{\veQ}^i(A_{\tau})=\sum_{i'\in S_{\tau-1}^i} \bar{\veq}^{i'}(A_{\tau})$, and moreover, there exist $1\le i_1,i_2\le n$ such that $\bar{\veq}^{i_1}(A_{\tau})\sqsubset {\veq}^{i_1}(A_{\tau})$ and $\bar{\veq}^{i_2}(A_{\tau})\neq 0$. 
		
		We define
		\begin{eqnarray*}
			\bar{\veg}^i=\veG(A_{\tau})\bar{\veq}^i(A_{\tau}),\quad \forall 1\le i\le d_{\tau}=n
		\end{eqnarray*}
		Note that by the positive sum property of the Graver basis, if $q_j^i(A_{\tau})> 0$ then $\veg^j(A_{\tau})$ must lie in the same orthant as $\veg^i$. Therefore $\bar{\veq}^{i'}(A_{\tau})\sqsubseteq \veq^{i'}(A_{\tau})$ implies that $\bar{\veg}^i\sqsubseteq \veg^i$. Further, $A_{\tau}\bar{\veg}^i=0$, and moreover, there exist $1\le i_1,i_2\le n$ such that $\bar{\veg}^{i_1}\sqsubset {\veg}^{i_1}$ and $\bar{\veg}^{i_2}\neq 0$. Therefore, $0\neq \bar{\veg}=(\bar{\veg}^1,\cdots,\bar{\veg}^n)\sqsubset \veg$. 
		
		Finally we show that $A\bar{\veg}=0$. This is equivalent as showing for every $1\le k\le \tau-1$, 
		\begin{eqnarray*}
			\sum_{i_{\tau-k-1}\in S_{k}^\ell}\sum_{i_{\tau-k-2}\in S_{k+1}^{\tau-k-1}}\sum_{i_{\tau-k-3}\in S_{k+2}^{i_{\tau-k-2}}}\cdots\sum_{i_{0}\in S_{\tau-1}^{i_{1}}} A_{k}\bar{\veg}^{i_0}=0, \quad \forall 1\le \ell\le d_{k}.
		\end{eqnarray*}
		
		Using the equations $\bar{\veg}^i=\veG(A_{\tau})\bar{\veq}^i(A_{\tau})$ and $\sum_{i'\in S_{k}^i}\bar{\veq}^{i'}(A_{k+1}')=\bar{\veQ}^{i}(A_{k+1}')=\veg(A_k')\bar{\veq}^i(A_{k}')$, we have
		\begin{eqnarray*}
			&&\sum_{i_{\tau-k-1}\in S_{k}^\ell}\sum_{i_{\tau-k-2}\in S_{k+1}^{\tau-k-1}}\sum_{i_{\tau-k-3}\in S_{k+2}^{i_{\tau-k-2}}}\cdots\sum_{i_{0}\in S_{\tau-1}^{i_{1}}} A_{k}\bar{\veg}^{i_0}\\
			&=&\sum_{i_{\tau-k-1}\in S_{k}^\ell}\sum_{i_{\tau-k-2}\in S_{k+1}^{\tau-k-1}}\sum_{i_{\tau-k-3}\in S_{k+2}^{i_{\tau-k-2}}}\cdots\sum_{i_{0}\in S_{\tau-1}^{i_{1}}} A_{k}\veG(A_{\tau})\bar{\veq}^{i_0}(A_{\tau})\\
			&=&\sum_{i_{\tau-k-1}\in S_{k}^\ell}\sum_{i_{\tau-k-2}\in S_{k+1}^{\tau-k-1}}\sum_{i_{\tau-k-3}\in S_{k+2}^{i_{\tau-k-2}}}\cdots\sum_{i_{1}\in S_{\tau-2}^{i_{2}}} A_{k}\veG(A_{\tau})\bar{\veQ}^{i_1}(A_{\tau})\\
			&=&\sum_{i_{\tau-k-1}\in S_{k}^\ell}\sum_{i_{\tau-k-2}\in S_{k+1}^{\tau-k-1}}\sum_{i_{\tau-k-3}\in S_{k+2}^{i_{\tau-k-2}}}\cdots\sum_{i_{1}\in S_{\tau-2}^{i_{2}}} A_{k}\veG(A_{\tau})\veG(A_{\tau-1}')\bar{\veq}^{i_1}(A_{\tau})\\
			&=&\sum_{i_{\tau-k-1}\in S_{k}^\ell}\sum_{i_{\tau-k-2}\in S_{k+1}^{\tau-k-1}}\sum_{i_{\tau-k-3}\in S_{k+2}^{i_{\tau-k-2}}}\cdots\sum_{i_{1}\in S_{\tau-3}^{i_{3}}} A_{k}\veG(A_{\tau})\veG(A_{\tau-1}')\bar{\veQ}^{i_2}(A_{\tau-1}')\\
			&=&...\\
			&=&A_{k}\veG(A_{\tau})\veG(A_{\tau-1}')\cdots\veG(A_{k+1}')\bar{\veQ}^\ell=A_k'\bar{\veQ}^\ell=0
		\end{eqnarray*}
		Therefore, the claim is proved.
	\end{proof}
	We now show that $\sum_{i=1}^n||\veq^i(A_{\tau})||_1=\sum_{i,j}|q^i_j(A_{\tau})|$ is upper bounded by some value that only depends on $A_1,A_2,\cdots,A_{\tau}$. Using the fact that $\sum_{i'\in S_{k}^i}{\veq}^{i'}(A_{k+1}')={\veQ}^{i}(A_{k+1}')=\veG(A_k'){\veq}^i(A_{k}')$,we have
	\begin{eqnarray*}
		\sum_{i=1}^{d_{k+1}}||\veq^i(A_{k+1}')||_1
		=\sum_{i=1}^{d_{k}}||\veQ^i(A_{k+1}')||_1
		=\sum_{i=1}^{d_k}||\veG(A_{k}')\veq^i(A_{k}')||_1 \le\sum_{i=1}^{d_k}||\veG(A_{k}')||_1||\veq^i(A_{k}')||_1
		=\sum_{i=1}^{d_{k-1}}||\veG(A_{k}')||_1||\veQ^i(A_{k}')||_1
	\end{eqnarray*}
	
	Therefore, $$\sum_{i=1}^n||\veq^i(A_{\tau})||_1\le ||\veG(A_{\tau-1}')||_1||\veG(A_{\tau-2}')||_1\cdots||\veG(A_{2}')||_1||\veQ^i(A_{2}')||_1.$$
	Obviously each $A_k'$, and hence its Graver basis, and hence $||\veG(A_k')||$, is only dependent on $A_1,\cdots, A_{\tau}$. Furthermore, $\veQ^i(A_{2}')\in \G(A_1')$, hence $||\veQ^i(A_{2}')||_1$, and consequently $\sum_{i=1}^n||\veq^i(A_{\tau})||_1$, is only dependent on $A_1,\cdots, A_{\tau}$. Thus, for $\lambda=||\veG(A_{\tau-1}')||_1||\veG(A_{\tau-2}')||_1\cdots||\veG(A_{2}')||_1||\veQ^i(A_{2}')||_1$ we have $\veg\in H(A)$, and the lemma is proved.
\end{proof}

\subsection{Proof of Lemma~\ref{lemma:cite}}\label{ap-sec:cite}
\begin{proof}[Proof of Lemma~\ref{lemma:cite}]
	Notice that if we fix $\veg=\veg^*$, then $\gamma=\gamma^*$ is the largest integer such that $\vel\le \vex+\gamma\veg^*\le\veu$ is still true. Therefore, if we consider each brick of the solution $\vex=(\vex^1,\vex^2,\cdots,\vex^n)$, then there exists some $1\le i\le n$ such that $\gamma^*$ is the largest integer such that $\vel^i\le \vex^i+\gamma\veg^{*i}\le\veu^i$ is still true. As $\veg^{*}\in H(A)$, $\veg^{*i}\in H(A)$ for every $i$. Now for every $\veh\in H(A)$ and every $1\le i\le n$, we find out the largest integer $\gamma_{\veh,i}$ such that $\vel^i\le \vex^i+\gamma_{\veh,i}\veh^i\le\veu^i$ is true and add this integer to $\Gamma$. Obviously $\gamma^*\in \Gamma$ and $|\Gamma|\le n|H(A)|$.
	\end{proof}

\subsection{Constructing an initial feasible solution}
\label{ap-sec:initial-solution}
We have proved the correctness of Theorem~\ref{thm:main-ILP} if a feasible initial solution is given. In case a feasible solution is unknown,
we construct an auxiliary tree-fold integer programming such that i). the initial feasible solution of the auxiliary programming is trivial; ii). the optimal solution of the auxiliary programming gives a feasible initial solution for the original tree-fold programming~(\ref{eq:tree-fold}). The argument is essentially the same as that of~\cite{hemmecke2013n}.

We add auxiliary variables. For each $\vex^i$, we add $2\sum_{k=1}^{\tau}s_k$ auxiliary variables and let them be $\vez^i$. The new vector of variables becomes $(\vex^1,\vez^1,\vex^2,\vez^2,\cdots,\vex^n,\vez^n)$.

We introduce a lower bound of $0$ and upper bound of $||\veb||_{\infty}$ for each auxiliary variable. For each $1\le k\le \tau$, we replace each $A_k$ with $(A_k,0_{s_k\times s_1},0_{s_k\times s_1},0_{s_k\times s_2},0_{s_k\times s_2},0_{s_k\times s_3},\cdots,0_{s_k\times s_{k-1}},I_{s_k\times s_k},-I_{s_k\times s_k},0_{s_k\times s_{k+1}},0_{s_k\times s_{k+1}},\cdots,0_{s_k\times s_{\tau}})$.

We change the objective function as the summation of all the auxiliary variables.

A feasible initial solution for the auxiliary ILP could be easily derived by setting $\vex=0$ and approperiate values to the auxiliary variables. Furthermore, the optimal solution of the auxiliary ILP is $0$ if and only if there exists a feasible solution for~(\ref{eq:tree-fold}). Therefore, we can apply our algorithm of the previous subsection to solve the auxiliary ILP and derive its optimal solution, which provides an initial feasible solution for the original tree-fold integer programming~(\ref{eq:tree-fold}).

\subsection{Proof of Lemma~\ref{lemma:ILP-to-solution}}\label{ap-sec:ILP-to-solution}
\begin{proof}[Proof of Lemma~\ref{lemma:ILP-to-solution}]
	Let $LF$ be the set of leaves. In the following we show that it is possible to select a subset $LF'\subseteq LF$ such that there exists a subtree of weight at most $B$ that contains each vertex of $LF'$, and furthermore, if we delete $LF'$ (together with the edge incident to them) from the tree $T$, there exists a feasible solution of the ILP for the remaining tree $T'$ with the objective value at most $m-1$. If the above claim is true, we can iteratively carry on the argument to construct $m$ subtrees that contain every vertex of $LF$ and the lemma is proved.
	
	We pick an arbitrary $j_0$ such that  $x_{1,(CF_{j_0},1)}\ge 1$. Consider the children of the root $v_1$. According to constraint $(I)$, for any location $k$ such that $f_{j_0}(k)=1$ (i.e., the location of the vertices who are children of the root of $CF_{j_0}$), we have $$\sum_{s:v_s\in CH(v_1)} x_{s,(CF_{j_0},k)}=x_{1,(CF_{j_0},1)}\ge 1.$$
	Hence, for any $k$ such that $f_{j_0}(k)=1$, there exists at least one child of $v_1$, say, $v_{s(1,k)}$, such that $x_{s(1,k),(CF_{j_0},k)}\ge 1$. We pick an arbitrary one (if there are multiple) of such vertices for every $k$ and let $H(1)$ be the set of these vertices.
	
	Consider an arbitrary $v_{s(k_1)}\in H_1$ where $x_{s(k_1),(CF_j,k_1)}\ge 1$. According to constraint $(I)$, for any $k_2$ such that $f_{j_0}(k_2)=k_1$, we have 
	$$\sum_{s:v_s\in CH(v_{s(k_1)})} x_{s,(CF_{j_0},k_2)}=x_{{s(k_1)},(CF_{j_0},k_1)}\ge 1.$$
	Hence, for any $k_2$ such that $f_{j_0}(k_2)=k_1$, there exists at least one child of $v_{s(k_1)}$, say, $v_{s(k_2)}$ such that $x_{s(k_2),(CF_{j_0},k_2)}\ge 1$. We pick an arbitrary one of such vertices for every $k_2$ such that $f_{j_0}(k_2)=k_1$, and let $H(1,k_1)$ be the set of these vertices.
	
	Suppose in general we have constructed the set of vertices $H(1,k_1,k_2,\cdots,k_i)$ such that
	\begin{itemize}
		\item for any $1\le h\le i$, $f_{j_0}(k_{h})=k_{h-1}$;
		\item for any $k_{i+1}$ such that $f_{j_0}(k_{i+1})=k_i$, there exists exactly one vertex $v_{s(k_{i+1})}\in H(1,k_1,k_2,\cdots,k_i)$ such that $x_{s(k_{i+1}),(CF_{j_0},k_{i+1})}\ge 1$.
	\end{itemize} 
	If there exists at least one vertex of $H(1,k_1,k_2,\cdots,k_i)$ which is not a leaf, we proceed as follows. For any $v_{s(k_{i+1})}\in H(1,k_1,\cdots,k_i)$ which is not a leaf and any $k_{i+2}$ such that $f(k_{i+2})=k_{i+1}$, the following is true:
	$$\sum_{s:v_s\in CH(v_{s(k_{i+1})})} x_{s,(CF_{j_0},k_{i+2})}=x_{{s(k_{i+1})},(CF_{j_0},k_{i+1})}\ge 1.$$
	Hence, there exists at least one child of $v_{s(k_{i+1})}$, say, $v_{s(k_{i+2})}$ such that $x_{s(k_{i+2}),(CF_{j_0},k_{i+2})}\ge 1$. We pick an arbitrary one of such vertices for every $k_{i+2}$ and let $H(1,k_1,\cdots,k_{i+1})$ be the set of them. Otherwise every vertex of $H(1,k_1,k_2,\cdots,k_i)$ is a leaf and we stop.
	
	Eventually we derive a sequence of sets $H(1,k_1,k_2,\cdots,k_i)$ and let $H$ be the union of them.
	
	Let $T[H]$ be the induced subgraph of $T$. Firstly, we claim that $T[H]$ is a subtree of the original tree $T$. To see why, it suffices to notice that every vertex of $H(1,k_1,k_2,\cdots,k_i)$ is connected to the root $v_1$.
	
	Secondly, we claim that every leaf of the subtree $T[H]$ is also a leaf in $T$. This is straightforward. Let $v_s$ be an arbitrary leaf of $T[H]$ which is not a leaf in the original graph, then according to our iterative construction, we will further consider the children of $v_s$ and add some of them to $H$. 
	
	Thirdly, we claim that the weight of $T[H]$ is at most $B$. Indeed, the claim follows directly as every vertex of $H$ is consistent to some vertex in $CF_{j_0}$. 
	
	Let $LF(H)$ be the set of leaves in $T[H]$. We delete $LF(H)$ and the edges incident to them in $T$ and consider the ILP for the remaining subtree $T'$. It is easy to verify that the following solution $x_{s,(CF_j,k)}'$ is a feasible solution to $ILP(T')$ with the objective of at most $m-1$:
	\begin{eqnarray*}
		&&x_{s,(CF_j,k)}'=x_{s,(CF_j,k)},\quad \textrm{if } j\neq j_0\\
		&&x_{s,(CF_{j_0},k)}'=x_{s,(CF_{j_0},k)}-1,\quad \textrm{if }  v_s\in H\setminus LF(H) 
	\end{eqnarray*}
	
	Therefore given a feasible integer solution with the objective value at most $m$, we can iteratively construct at most $m$ subtrees such that every vertex is covered, and the lemma is proved. 
\end{proof}

\subsection{Tuning the ILP}\label{ap-sec:tuning}
We alter the ILP a bit so that it becomes a tree-fold integer programming.


Given $CF_j$, we let $F^{-1}_j(k)=\{w|f_j(w)=k\}$. For $h\ge 2$, we define $F^{-h}_j(k)=\{w|f_j(w)\in F^{-h+1}_j(k)\}$. Recall that $f_j$ is the function that maps the location of a vertex to the location of its parent in $CF_j$, therefore $F^{-h}_j(k)$ the set of locations of vertices satisfying the following: i). they are descendants of the location $k$ vertex; ii). for each of them, the unweighted distance to the location $k$ vertex is $h$.

We show that, it is possible to remove all the variables $x_{i,(CF_j,k)}$ where $v_i$ is not a leaf and establish an equivalent ILP. 

Let $LF(v_i)$ be the set of all leaves of the subtree rooted at $v_i$. By constraint $(I)$, we have the following
$$x_{i,(CF_j,k)}=\sum_{s:v_s\in CH(v_i)} x_{s,(CF_j,w)},\quad \forall w\in F_j^{-1}(k).$$
If $w\in F_j^{-1}(k)$ is not a leaf, we could further express $x_{s,(CF_j,w)}$ into the summation of other variables. In general, consider any vertex $v_i$ whose depth is $h(T)-h$. As the depth of every leaf is $h(T)$, the unweighted distance of any leaf in $LF(v_i)$ to $v_i$ is $h$, and we have the following:
$$x_{i,(CF_j,k)}=\sum_{s:v_s\in LF(v_i)} x_{s,(CF_j,w)},\quad \forall w\in F_j^{-h}(k).$$
Specifically,
$$x_{1,(CF_j,1)}=\sum_{s:v_s\in LF} x_{s,(CF_j,w)},\quad \forall w\in F_j^{-h(T)}(1).$$

Now every $x_{1,(CF_j,1)}$ could be expressed using $x_{s,(CF_j,w)}$ where $v_s$ is a leaf. We replace the objective function using the above equations.

Let $L_h(CF_j)$ be the subset of locations of $CF_j$ whose depth is $h(T)-h$, and let $L_h^{\ge 2}(CF_j)=\{k||F_j^{-h}(k)|\ge 2\}$, we replace constraint $(I)$ by the following:
$$\sum_{s:v_s\in LF(v_i)} x_{s,(CF_j,w)}-\sum_{s:v_s\in LF(v_i)} x_{s,(CF_j,w')}=0,\quad \forall v_i\in V_h, k\in L_h^{\ge 2}(CF_j), w,w'\in F_j^{-h}(k),\quad (I')$$
where $V_h$ is the set of vertices of depth $h(T)-h$.

It is obvious that the new ILP is equivalent as the original ILP since we simply replace each $x_{s,(CF_j,w)}$ where $v_s$ is not a leaf with the equality it satisfies.


In the following we show that the modified ILP belongs to the tree-fold integer programming. It suffices to consider constraints $(I')$ and $(II)$. Let $\vex^i=$ $(x_{i,(CF_1,1)}$, $x_{i,(CF_1,2)}$, $\cdots$, $x_{i,(CF_1,\zeta)},x_{i,(CF_2,1)},\cdots,x_{i,(CF_2,\zeta)},\cdots,x_{i,(CF_\mu,\zeta)})^T$ and $\vex=(\vex^1,\vex^2,\cdots,\vex^{|LF|})^T$.

Consider constraint $(II)$:
$$\sum_{j=1}^{\mu}\sum_{k=1}^{\zeta} x_{i,(CF_j,k)}=1, \quad \forall v_i\in LF$$
Let $\tau=|h(T)|+1$. We define $A_{1}=I_{\mu\zeta\times\mu\zeta}$, constraint $(II)$ could be written as $\sum_i A_1\vex^i=(1,1,\cdots,1)_{1\times\mu\zeta}$.

Consider constraint $(I')$. For any vertex $v_s\in LF(v_i)$ where $v_i\in V_h$, the constraint $(I')$ could be rewritten as $\sum_{s:v_s\in LF(v_i)}A_{\tau-h}\vex^s=0$ where $A_{\tau-h}$ consists of $\sum_j\sum_{k\in L^{\ge 2}_h(CF_j)}(|F_j^{-h}(k)|-1)\cdot|F_j^h(k)|/2$ different rows, and each row consists of $0,1,-1$ such that the entry that becomes the coefficient of $x_{s,(CF_j,w)}$ after multiplication is $1$, the entry that becomes the coefficient of $x_{s,(CF_j,w')}$ after multiplication is $-1$, and other entries are $0$. Given the fact that $LF(v_i)=\cup_{s:v_s\in CH(v_i)}LF(v_s)$, it is not difficult to verify that contraints $(I')$ and $(II)$ could be written as $A\vex=b$ where $A$ is a tree-fold matrix consisting of submatrices $A_1$, $A_2$, $\cdots$, $A_{\tau}$.

Now applying Theorem~\ref{thm:main-ILP}, an $f(B)n^4$ time algorithm for the subtree cover problem is derived for some function $f$, and Theorem~\ref{thm:main-fpt} is proved.

\clearpage
\bibliographystyle{plain}
\bibliography{schedule-tree}

\end{document}